\newcommand{\cC}{\mathcal{C}}
\newcommand{\cH}{\mathcal{H}}
\newcommand{\cI}{\mathcal{I}}
\newcommand{\cK}{\mathcal{K}}
\newcommand{\cL}{\mathcal{L}}
\newcommand{\cO}{\mathcal{O}}
\newcommand{\cS}{\mathcal{S}}
\newcommand{\Id}{\mathbb{I}}
\newcommand{\tr}{\text{Tr}}
\newtheorem{theorem}{Theorem}
\newtheorem{proposition}{Proposition}
\newtheorem{corollary}[theorem]{Corollary}
\newtheorem{definition}{Definition}
\newtheorem{example}{Example}
\begin{document}

\title{Layers of classicality in the compatibility of measurements}
\author{Arindam Mitra$^{1,2}$}
\affiliation{$^1$Optics and Quantum Information Group, The Institute of Mathematical Sciences,
C. I. T. Campus, Taramani, Chennai 600113, India.\\
$^2$Homi Bhabha National Institute, Training School Complex, Anushakti Nagar, Mumbai 400094, India.}

\date{\today}

\begin{abstract}
The term ``Layers of classicality" in the context of quantum measurements, was introduced in [T. Heinosaari, Phys. Rev. A \textbf{93}, 042118 (2016)]. The strongest layer among these consists of the sets of observables that can be broadcast and the weakest layer consists of the sets of compatible observables. There are several other layers in between those two layers. In this work, we study their physical and geometric properties and show the differences and similarities among the layers in these properties. In particular we show that: (i) none of the layers of classicality respect transitivity property, (ii) the concept like degree of broadcasting similar to degree of compatibility does not exist,  (iii) there exist informationally incomplete POVMs that are not individually broadcastable, (iv)  a set of broadcasting channels can be obtained through concatenation of broadcasting and non-disturbing channels, (v) unlike compatibility, other layers of classicality are not convex, in general. Finally, we discuss the relations among these layers. More specifically, we show  that specific type of concatenation relations among broadcasting channels decide the layer in which a pair of observables resides.
\end{abstract}

\maketitle
\section{Introduction}
The incompatibility of observables is one of the origins of nonclassicality in quantum theory. A set of POVMs is compatible if those can be measured simultaneously. Otherwise, they are incompatible \cite{heinosaari}. Today we all know the connections between incompatibility, nonlocality and steering \cite{Brunner__nonloc_incomp,bush_bell_steer_incomp}. Nonclassical features like Bell inequality violation as well as the demonstration of steering is  only possible using incompatible measurements \cite{Barnett_nonloc_incomp, Uola_steer_incomp}. In quantum information theory, incompatibility of measurements provide an advantage over compatible measurements in several information-theoretic tasks \cite{mubs,qkd_d}.\\
The impossibility of universal cloning of quantum state is known as the No-cloning theorem\cite{wootters-no_clon}. Broadcasting is the weaker version of cloning. Today we know that only commuting states can be simultaneously broadcast \cite{barnum}. Similarly, there are no-cloning theorems for a single POVM as well as for a set of POVMs \cite{Paris_clon_obser,Paris_clon_obser_joint_measure,Rodriguez_clon,Rastegin}.\\
Compatible measurement does not show any nonclassical feature. But the compatibility of observables is the weakest layer of classicality. The other layers from stronger to weaker can be written as (1) broadcastable sets of observables, (2) one-side broadcastable sets of observables, (3) mutually nondisturbing observables, and (4) nondisturbing observables \cite{heino_layers_of_classicality}.\\

In the Ref.  \cite{heino_layers_of_classicality}, the existence of hierarchies of these layers have been proved and complete characterisation for qubit observables is presented and also the the work presented in Ref. \cite{heino_layers_of_classicality} is the first work in this research direction.
In the present article, we discuss the properties of individual layers and the present work may be considered as the immediate development in this research direction subsequent to the work prsentated in the Ref. \cite{heino_layers_of_classicality}. More specifically in this paper, we mainly tackle the following questions- (i) what are the differences in properties between compatibility and any other layer of classicality? (ii) are there any similarities in the properties between compatibility and any other layer of classicality? (iii) what are the connections among different layers of classicality?  Our results open up several research avenues which has been summarised in the conclusion.\\

The rest of this paper is organized as follows. In  section \ref{sec:prelimi}, we briefly review the incompatibility of quantum observables, quantum channels and the layers of classicality. In section \ref{sec:main}, we discuss our main results i.e., physical as well as geometric properties of different layers of classicality and relations among different layers of classicality.
 More specifically, in the subsection \ref{subsec:Physical}, we show that none of the layers of classicality respect transitivity property. Then we show that the concept like degree of broadcasting similar to degree of compatibility does not exist.  We also show the existence of informationally incomplete POVMs that are not individually broadcastable. Next we show that a set of broadcasting channels can be obtained through concatenation of broadcasting and non-disturbing channels. In the subsection \ref{subsec:Geometric},  we show that unlike compatibility, other layers of classicality are not convex, in general.  Finally in the subsection \ref{subsec:Relation}, We show  that specific type of concatenation relations among the broadcasting channels decide the layer in which a pair of observables resides.
 Finally, in section \ref{sec:conc} we summarize the work and discuss the future outlook.
\section{Preliminaries}\label{sec:prelimi}
In this section, we briefly discuss on compatibility, quantum channels, quantum instruments and the layers of classicality.
\subsection{Observables and Compatibility}
Let, $A$ be an observable acting on $d$ dimensional Hilbert space $\cH$ with number of outcomes $n_A$. Then we denote the outcome set of $A$ as $\Omega_A=\{1,....,n_A\}$. We denote the set of all observables  acting on $d$ dimensional Hilbert space $\cH$ with outcome set $X$ as $\cO^d_X$ and with outcome set $Y$ as $\cO^d_Y$. Then we denote the cartesian product of these two sets as $\cO^d_{XY}=\cO^d_X\times\cO^d_Y$. Then the set $\cO^d_{XY}$ is the set of all the pairs $(A,B)$ $\forall A\in \cO^d_X,\forall B\in\cO^d_Y$. From now on, through out the paper, we will consider all observables are acting on $d$ dimensional Hilbert space $\cH$ unless specifically mentioned.  A pair of observables $(A,B)\in \cO^d_{XY}$ is  compatible iff there exist an observable $G\in\cO^d_{Z}$ where $Z=X\times Y$, such that

\begin{eqnarray}
A(x)=\sum_y G(x,y)~\forall x\in X\\
B(y)=\sum_x G(x,y)~\forall y\in Y.
\end{eqnarray}

We denote the set of all compatible pairs acting on $d$ dimensional Hilbert space $\cH$ as $\cO^d_{comp,XY}$\cite{Heinosaari_Ziman}. Clearly, $\cO^d_{comp,XY}\subseteq \cO^d_{XY}$. Also $\cO^d_{comp,XY}$ is convex, where for $p\in[0,1]$, the convex combination of two observables $A\in\cO^d_X$ and $B\in\cO^d_X$ is defined as $pA+(1-p)B=\{pA(x)+(1-p)B(x)\}$ \cite{heino_incomp_wit}.\\
An observable $A=\{A(x)\}$ is called commutative if $A(x)A(y)=A(y)A(x)~\forall x,y\in\Omega_A$.
Throughout the paper, we restrict ourselves to the pairs of observables. A pair of observables $(A,B)\in\cO^d_{XY}$ is mutually commuting if $A(x)B(y)=B(y)A(x)$ for all $x\in X$ and $y\in Y$.

\subsection{Quantum Channels}
We denote the set of density matrices on a Hilbert space $\cH$ as $\cS(\cH)$. We also denote the set of linear operators on Hilbert space $\cH$ as $\cL(\cH)$.  A CPTP map $\Theta:\cS(\cH_{in})\rightarrow\cS(\cH_{out})$ is called a quantum channel, where $\cH_{in}$ and $\cH_{out}$ are two Hilbert spaces \cite{Heinosaari_Ziman}. The dual channel of $\Lambda$ is defined as a map $\Lambda^*:\cL(\cH_{out})\rightarrow\cL(\cH_{in})$ which satisfies the equation $\tr[\Lambda(\rho) A(x)]=\tr[\rho\Lambda^*(A(x))]$ for all $x\in\Omega_A$, for all states $\rho\in \cS(\cH_{in})$ and all observable $A=\{A(x)\}$ acting on $\cH_{out}$. We denote composition of two channels $\Lambda_1:\cS(\cH_{in})\rightarrow\cS(\cH^{\prime})$ and $\Lambda_2:\cS(\cH^{\prime})\rightarrow\cS(\cH_{out})$ as $\Lambda_2\circ\Lambda_1$, where $\cH^{\prime}$ is another Hilbert space. From the definition of dual channel we have
\begin{equation}
(\Lambda_2\circ\Lambda_1)^*=\Lambda_1^*\circ\Lambda_2^*.
\end{equation}
 We denote the set of all channels for which $\cH_{in}=\cH_{out}=\cH$ as $\cC^d$.\\

Our definition of broadcasting channel is same as the definition given in \cite{heino_layers_of_classicality}.

\begin{definition}
A channel $\Lambda :\cS(\cH)\rightarrow \cS(\cH_A\otimes \cH_B)$ with $\cH_A=\cH_B=\cH$ is called a broadcasting channel.
\end{definition}
We denote the set of all broadcasting channels acting on $\cS(\cH)$ as $\cC^d_{broad}$.
Let us now define a property of broadcasting channels which we named as \textit{local changeability.}

\begin{definition}
A broadcasting channel $\Lambda_1\in\cC^d_{broad}$ is called locally changeable to another broadcasting channel $\Lambda_2\in\cC^d_{broad}$ if there exist two channels $\Sigma_1\in\cC^d$ and $\Sigma_2\in\cC^d$ such that
\begin{equation}
\Lambda_2=(\Sigma_1\otimes\Sigma_2)\circ\Lambda_1
\end{equation}
and we denote it as $\Lambda_2\preceq_{local}\Lambda_1$. If both $\Lambda_2\preceq_{local}\Lambda_1$ and $\Lambda_1\preceq_{local}\Lambda_2$ hold then they are called locally interchangeable and we denote it as $\Lambda_1\simeq_{local}\Lambda_2$.
\end{definition}

We will use this property to construct relations between different layers of classicality.

\subsection{Quantum Instruments}
An  instrument $\cI$ with outcome set $X$ is the collection of CP maps $\cI=\{\cI_x:\cL(\cH)\rightarrow\cL(\cK)\}$ such that $\cI^{\cC}=\sum_x\cI_x$ is a quantum channel, where $\cH$ and $\cK$ are two Hilbert spaces. An observable $A\in\cO^d_X$ can be implemented using $\cI$ if for all $\rho\in\cS(\cH)$ and $x\in X$, $\tr[\rho A(x)]=\tr[\cI_x(\rho)]$. Such an instrument is known as $A$-compatible instrument and the channel $\cI^{\cC}$ is called compatible with $A$\cite{Heinosaari_Ziman}.
\subsection{Layers of Classicality}
In this subsection, we discuss briefly on the layers of classicality \cite{heino_layers_of_classicality}.
\begin{figure}[hbt!]
\includegraphics[width=7.5cm,height=4.3cm]{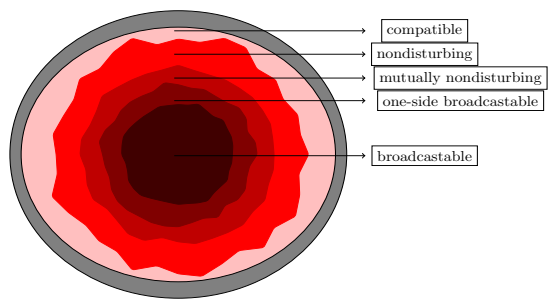}
\caption{(Colour online) Different layers of classicality.}\label{fig:layers}
\end{figure}
\subsubsection{Broadcastable pairs of observables}
\begin{definition}
An observable $A\in\cO^d_{X}$ is broadcast by a broadcasting channel $\Lambda\in\cC^d_{broad}$ if for all $\rho\in\cS(\cH)$, $x\in X$ 
\begin{align}
\tr[\rho A(x)]&=\tr[\Lambda(\rho) A(x)\otimes\Id]=\tr[\Lambda(\rho)\Id\otimes A(x)].
\end{align}
\end{definition}
A pair of observables $(A,B)\in\cO^d_{XY}$ is  broadcastable if there exists a broadcasting channel $\Lambda$ which can broadcast both observables $A$ and $B$. We denote the set of all broadcastable pairs of observables in $\cO^d_{XY}$ as $\cO^d_{broad,XY}$.\\
 It is proved in \citep{heino_layers_of_classicality} that a mutually commuting pair of commutative observables acting on finite dimensional Hilbert space is broadcastable. It is also proved in \cite{heino_layers_of_classicality} that a pair of qubit observables are broadcastable iff those observables are commutative and mutually commuting.

\subsubsection{One-side broadcastable pairs of observables}
\begin{definition}
A pair of observables $(A,B)\in\cO_{XY}$ is one-side broadcastable if there exists a broadcasting channel $\Lambda\in\cC^d_{broad}$ such that for all $\rho\in\cS(\cH)$, $x\in X$ and $y\in Y$
\begin{align}
&\tr[\rho A(x)]= \tr[\Lambda(\rho)A(x)\otimes\Id],\\
&\tr[\rho B(y)]= \tr[\Lambda(\rho)\Id\otimes B(y)]
\end{align}
\end{definition}
hold. We denote the set of all one side-broadcastable pairs of observables in $\cO^d_{XY}$ as $\cO^d_{1-side-broad,XY}$.

\subsubsection{Nondisturbing pairs of observables}
\begin{definition}
An observable $A\in\cO^d_X$ can be measured without disturbing $B\in\cO^d_Y$ if there exists an instrument $\cI=\{\cI(x)\}$ such that for all $\rho\in\cS(\cH)$, $x\in X$ and $y\in Y$
\begin{align}
&\tr[\rho A(x)]=\tr[\cI_x(\rho)],\\
&\tr[\rho B(y)]=\tr[\cI^{\cC}(\rho)B(y)].
\end{align}
\end{definition}
A pair of observables $(A,B)$ is called nondisturbing if either $A$ can be measured without disturbing $B$ or $B$ can be measured without disturbing $A$ \cite{heino_layers_of_classicality}.
We denote the set of all nondisturbing pairs of observables in $\cO^d_{XY}$ as $\cO^d_{nondist,XY}$. Some of the properties of nondisturbing pairs of observables has been discussed in the article  \cite{heino-wolf}. In the dimension more than two, the complete characterization of nondisturbing pairs is difficult and requires separate investigation.

\subsubsection{Mutually nondisturbing pairs of observables}
\begin{definition}
A pair of observables $(A,B)\in\cO_{XY}$ is mutually nondisturbing if $A$ can be measured without disturbing $B$ and $B$ can be measured without disturbing $A$.
\end{definition}
We denote the set of all mutually nondisturbing pairs of observables in $\cO^d_{XY}$ as $\cO^d_{mut-nondist,XY}$.\\
It is proved in \cite{heino_layers_of_classicality} that $\cO^d_{broad,XY}\subseteq\cO^d_{1-side-broad,XY}\subseteq\cO^d_{mut-nondist,XY}\subseteq \cO^d_{nondist,XY}\subseteq\cO^d_{comp,XY}$.\\
For any observable, if all of its POVM elements are multiple of identity, then the observable is called a trivial observable. Otherwise, it is called non-trivial observable. Any mutually commuting pair of non-trivial qubit observables is necessarily commutative \citep{heino_layers_of_classicality}. 
We now write down the following proposition which is originally proved in \citep{heino_layers_of_classicality}.
\begin{proposition}\label{prop:equi_state_heino}
For a pair of non-trivial qubit observables $A$ and $B$, the following statements are equivalent.
\begin{enumerate}
\item The pair $(A,B)$ is one-side broadcastable.
\item The pair $(A,B)$ is mutually nondisturbing.
\item The pair $(A,B)$ is nondisturbing.
\item The pair $(A,B)$ is mutually commuting.
\item The pair $(A,B)$ is broadcastable.
\end{enumerate}
\end{proposition}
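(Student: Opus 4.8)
The plan is to close the cycle of implications $(5)\Rightarrow(1)\Rightarrow(2)\Rightarrow(3)\Rightarrow(4)\Rightarrow(5)$. The three arrows $(5)\Rightarrow(1)\Rightarrow(2)\Rightarrow(3)$ are immediate and dimension-independent: they are precisely the inclusions $\cO^d_{broad,XY}\subseteq\cO^d_{1-side-broad,XY}\subseteq\cO^d_{mut-nondist,XY}\subseteq\cO^d_{nondist,XY}$ recalled above. For $(4)\Rightarrow(5)$ I would combine two facts already stated for qubits: a mutually commuting pair of non-trivial qubit observables is automatically commutative, and a pair of qubit observables is broadcastable iff both observables are commutative and mutually commuting; hence a mutually commuting non-trivial qubit pair is commutative and mutually commuting, therefore broadcastable. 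Thus the whole weight of the proposition sits in the single implication $(3)\Rightarrow(4)$: a nondisturbing pair of non-trivial qubit observables must be mutually commuting.

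To prove $(3)\Rightarrow(4)$, suppose without loss of generality that $A$ can be measured without disturbing $B$, and let $\cI$ be an $A$-compatible instrument whose channel $\Lambda:=\cI^{\cC}$ satisfies $\tr[\rho B(y)]=\tr[\Lambda(\rho)B(y)]$ for all $\rho$ and $y$; equivalently the dual channel obeys $\Lambda^{*}(B(y))=B(y)$ for every $y$. Since $\Lambda$ is a channel, $\Lambda^{*}$ is unital and completely positive, hence in particular $2$-positive. Fix an outcome $y$ with $B(y)$ non-scalar and write it in Bloch form, $B(y)=\beta_y\Id+\vec b_y\cdot\vec\sigma$ with $\vec b_y\neq 0$; because $\Lambda^{*}$ fixes $\Id$ and $B(y)$ it fixes $\vec b_y\cdot\vec\sigma$, and hence fixes the rank-one spectral projection $Q:=\tfrac{1}{2}(\Id+\hat b_y\cdot\vec\sigma)$, which is a linear combination of $\Id$ and $\vec b_y\cdot\vec\sigma$. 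Now $\Lambda^{*}(Q)=Q$ is a projection, so $\Lambda^{*}(Q^{*}Q)=Q=Q^{*}Q=\Lambda^{*}(Q)^{*}\Lambda^{*}(Q)$: equality holds in the Kadison--Schwarz inequality, so $Q$ lies in the multiplicative domain of $\Lambda^{*}$.

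It remains to transfer this to the level of Kraus operators. Writing $\Lambda(\rho)=\sum_{x,k}K_{xk}\rho K_{xk}^{\dagger}$ with $A(x)=\sum_{k}K_{xk}^{\dagger}K_{xk}$, membership of $Q$ in the multiplicative domain of $\Lambda^{*}$ is equivalent (via the Stinespring dilation assembled from the $K_{xk}$) to $QK_{xk}=K_{xk}\Lambda^{*}(Q)=K_{xk}Q$ for every $x,k$. Taking adjoints gives $[Q,K_{xk}^{\dagger}]=0$ as well, whence $[Q,K_{xk}^{\dagger}K_{xk}]=0$ and, summing over $k$, $[A(x),Q]=0$ for all $x$. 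Since $B(y)$ is a linear combination of $\Id$ and $Q$ this yields $[A(x),B(y)]=0$; the outcomes $y$ with $B(y)$ scalar commute with everything trivially, so $(A,B)$ is mutually commuting, which is statement $(4)$. I expect the multiplicative-domain step to be the main obstacle: one must exploit genuine $2$-positivity of $\Lambda^{*}$ so that the equality case of the Schwarz inequality really forces the Kraus operators to commute with $Q$, and one must run the spectral-projection reduction outcome by outcome so as to cover observables with more than two outcomes. Everything after $QK_{xk}=K_{xk}Q$ is routine bookkeeping, and feeding the resulting mutual commutativity into $(4)\Rightarrow(5)$ closes the equivalence.
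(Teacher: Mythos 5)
Your proof is correct, but note that the paper itself offers no proof to compare against: it explicitly imports this proposition from \cite{heino_layers_of_classicality}, so you have supplied an argument where the paper only supplies a citation. Your reduction of the problem is the right one: $(5)\Rightarrow(1)\Rightarrow(2)\Rightarrow(3)$ are exactly the chain of inclusions $\cO^d_{broad,XY}\subseteq\cO^d_{1\text{-}side\text{-}broad,XY}\subseteq\cO^d_{mut\text{-}nondist,XY}\subseteq\cO^d_{nondist,XY}$, and $(4)\Rightarrow(5)$ follows from the two qubit facts the paper quotes (mutual commutativity of non-trivial qubit observables forces commutativity, and commutative plus mutually commuting characterizes broadcastability), so all the weight is on $(3)\Rightarrow(4)$. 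Your handling of that step is sound: $\Lambda^*=(\cI^{\cC})^*$ is unital and completely positive, it fixes $\Id$ and each $B(y)$, hence fixes the spectral projection $Q=\tfrac12(\Id+\hat b_y\cdot\vec\sigma)$ because in dimension two that projection lies in the \emph{linear span} of $\Id$ and $B(y)$; the equality case of the Kadison--Schwarz inequality, read through the Stinespring isometry $W=\sum_{x,k}K_{xk}\otimes e_{xk}$, gives $(\Id-WW^*)(Q\otimes\Id)W=0$, i.e.\ $QK_{xk}=K_{xk}Q$, and summing $K_{xk}^{\dagger}K_{xk}$ over $k$ yields $[A(x),Q]=0$ and hence $[A(x),B(y)]=0$. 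Two remarks. First, be explicit that the projection-extraction step is the genuinely qubit-specific ingredient: for $d>2$ a non-scalar effect need not have its spectral projections in $\mathrm{span}\{\Id,B(y)\}$, which is consistent with the paper's warning that characterizing nondisturbance is harder in higher dimensions. Second, since $Q$ is self-adjoint you get both Schwarz equalities $\Lambda^*(Q^*Q)=\Lambda^*(Q)^*\Lambda^*(Q)$ and $\Lambda^*(QQ^*)=\Lambda^*(Q)\Lambda^*(Q)^*$ for free, so the multiplicative-domain membership is two-sided and the step you flagged as the ``main obstacle'' closes without further work; the one-line Stinespring computation above is all that is needed, and the full multiplicative-domain machinery can be bypassed.
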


\section{Main Results}\label{sec:main}
In this section, we discuss the main results of our work. At first, we discuss the physical properties of some layers of classicality. Then, we discuss geometric properties of different layers of classicality. Finally, we discuss the connection among different layers.
\subsection{Physical properties of some of the layers of classicality}\label{subsec:Physical}
In this section,  we discuss mainly the physical properties of broadcasting, one-side broadcasting and compatibility. Let us start with a simple property. We know that there exist three non-trivial observables $A,B$, and $C$ such that $(A,B)$ and $(B,C)$ are compatible pairs but the pair $(A,C)$ is not compatible i.e., compatibility does not respect transitivity property. Our next proposition shows that none of the layers of classicality respect this property.
\begin{proposition}
There exist non-trivial observables $A$, $B$, and $C$ such that the pairs $(A,B)$ and $(B,C)$ reside in a particular layer of classicality, while $(A,C)$ does not reside in that layer of classicality.\label{prop:transi}
\end{proposition}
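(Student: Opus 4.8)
The plan is to produce a \emph{single} triple of non-trivial observables that witnesses non-transitivity for all five layers at once. By the inclusions $\cO^d_{broad,XY}\subseteq\cO^d_{1-side-broad,XY}\subseteq\cO^d_{mut-nondist,XY}\subseteq\cO^d_{nondist,XY}\subseteq\cO^d_{comp,XY}$, it suffices to find non-trivial $A,B,C$ for which $(A,B)$ and $(B,C)$ are broadcastable while $(A,C)$ is \emph{not even compatible}: then the two outer pairs lie in every layer and the pair $(A,C)$ lies in none. Such an example must live in dimension $d\ge 3$: for non-trivial qubit observables Proposition~\ref{prop:equi_state_heino} collapses every layer to mutual commutativity, and if $B$ is a non-trivial qubit observable then everything mutually commuting with $B$ is diagonal in the eigenbasis of any non-scalar element of $B$, forcing $A$ and $C$ to be diagonal in a common basis and hence compatible.

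Work on $\cH=\bC^3$ with orthonormal basis $\{\ket{0},\ket{1},\ket{2}\}$, set $\ket{\pm}=(\ket{1}\pm\ket{2})/\sqrt{2}$, and take the sharp observables
\begin{align*}
A&=\{\,\ket{0}\bra{0},\ \ket{+}\bra{+},\ \ket{-}\bra{-}\,\},\\
B&=\{\,\ket{0}\bra{0},\ \ket{1}\bra{1}+\ket{2}\bra{2}\,\},\\
C&=\{\,\ket{0}\bra{0},\ \ket{1}\bra{1},\ \ket{2}\bra{2}\,\}.
\end{align*}
None of $A,B,C$ is a multiple of $\Id$, so all three are non-trivial, and each is a family of mutually orthogonal projections, hence commutative.

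The verification splits into three routine parts. (i) $(A,B)$ is mutually commuting: every element of $A$ is supported on $\ket{0}$ or inside $\mathrm{span}\{\ket{1},\ket{2}\}$, and the two elements of $B$ are precisely the projections onto those two subspaces, so $A(x)B(y)=B(y)A(x)$ for all $x,y$; since $A,B$ are commutative and mutually commuting, $(A,B)$ is broadcastable by the result of \cite{heino_layers_of_classicality} quoted in the preliminaries. (ii) $(B,C)$ is mutually commuting because $B$ and $C$ are both diagonal in $\{\ket{0},\ket{1},\ket{2}\}$, so $(B,C)$ is broadcastable for the same reason. (iii) $(A,C)$ is incompatible: $A$ and $C$ are sharp, two sharp observables are jointly measurable if and only if they commute (see, e.g., \cite{Heinosaari_Ziman}), and $A(2)C(2)=\tfrac{1}{\sqrt2}\ket{+}\bra{1}\neq\tfrac{1}{\sqrt2}\ket{1}\bra{+}=C(2)A(2)$, so $(A,C)\notin\cO^d_{comp,XY}$. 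This proves the proposition for every layer simultaneously.

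The only genuinely delicate point is the design of the middle observable $B$: it must commute with both $A$ and $C$ yet be coarse enough that $A$ and $C$ can disagree on its degenerate eigenspace $\mathrm{span}\{\ket{1},\ket{2}\}$ — which is exactly why a non-degenerate (in particular a qubit) $B$ cannot work, and everything else is the bookkeeping above. The same construction extends to any $\bC^d$ with $d\ge 3$ by appending the extra basis vectors $\ket{3},\dots,\ket{d-1}$ to, say, the last outcome of each of $A,B,C$, leaving the commutativity and incompatibility arguments unchanged.
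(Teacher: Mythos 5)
Your proof is correct and follows essentially the same strategy as the paper's: both place a degenerate, commutative observable $B$ between two sharp, mutually noncommuting observables $A$ and $C$, use the quoted fact that commutative and mutually commuting pairs are broadcastable to put $(A,B)$ and $(B,C)$ in every layer, and use sharpness plus noncommutativity to exclude $(A,C)$ from compatibility and hence from every layer. The only difference is cosmetic: you work in $\bC^3$ while the paper uses a two-qubit ($d=4$) construction, and your added remark on why no qubit example can exist is a nice touch but not needed for the claim.
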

\begin{proof}
Consider three non-trivial sharp observables acting on $\cH_2\otimes\cH_2$
\begin{eqnarray}
&A&=\{\ket{+0}\bra{+0},\ket{-0}\bra{-0}, \ket{01}\bra{01},\ket{11}\bra{11}\}\\
&B&=\{\Id\otimes\ket{0}\bra{0}, \ket{01}\bra{01},\ket{11}\bra{11}\}\\
&C&=\{\ket{00}\bra{00},\ket{10}\bra{10}, \ket{01}\bra{01},\ket{11}\bra{11}\}
\end{eqnarray}
where $\ket{0},\ket{1}$ are the eigenvectors of $\sigma_z$ and $\ket{+},\ket{-}$ are the eigenvectors of $\sigma_x$ and $\cH_2$ is Hilbert space with dimension $d=2$.\\
Clearly, these observables are commutative and also the pairs $(A,B)$ and $(B,C)$ are mutually commuting. Therefore, they are\\
broadcastable $\implies$ one side broadcastable $\implies$ mutually nondisturbing $\implies$ nondisturbing $\implies$ compatible.\\
But the pair $(A,C)$ is mutually noncommuting. Therefore, this pair is\\
incompatible $\implies$ disturbing $\implies$ mutually disturbing $\implies$ not one-side broadcastable $\implies$ not broadcastable.

So, the pairs $(A,B)$ and $(B,C)$ reside in all of these layers of classicality. But the pair $(A,C)$ does not reside in any of these layers of classicality.
\end{proof}

\begin{definition}
An observable $A_p$ is called an unsharp version of the observable $A$ with unsharp parameter $p$ iff $A_p=\{A_p(x)=pA(x)+(1-p)\frac{\Id}{n_A}\}$ for all $x\in\Omega_A$, where $n_A$ is the number of outcomes of $A$ and $p\in (0,1]$.
\end{definition}
Now we show a basic difference in a property between broadcasting and compatibility. Consider a pair of observables $(A,B)$ with number of outcomes $n_A$ and $n_B$. It is well known that for the particular value of $p,q\in(0,1]$, the compatibility of the pair $(A_p,B_q)$ does not imply compatibility of the pair $(A,B)$. But for broadcasting, this is not the case. The following theorem will clarify that.
\begin{theorem}
If the pair $(A_p,B_q)$ is broadcastable for some $p,q\in(0,1]$, then the pair $(A,B)$ is also broadcastable.\label{th:upsharp}
\end{theorem}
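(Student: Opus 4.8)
The plan is to show that the \emph{same} broadcasting channel that witnesses broadcastability of $(A_p,B_q)$ already witnesses broadcastability of $(A,B)$; no new channel need be constructed. The mechanism is that the defining equations of broadcasting are linear in the POVM elements, while unsharpening is an affine map with nonzero slope, so it can be inverted inside those equations.

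Concretely, I would start from a broadcasting channel $\Lambda:\cS(\cH)\to\cS(\cH_A\otimes\cH_B)$ that broadcasts both $A_p$ and $B_q$. Substituting $A_p(x)=pA(x)+(1-p)\Id/n_A$ into the broadcasting condition $\tr[\rho A_p(x)]=\tr[\Lambda(\rho)A_p(x)\otimes\Id]$ and using linearity of the trace, both sides split into a term proportional to $p$ and a term proportional to $(1-p)$. On the right-hand side the identity piece is $\tfrac{1-p}{n_A}\tr[\Lambda(\rho)(\Id\otimes\Id)]$, which equals $\tfrac{1-p}{n_A}$ because $\Lambda$ is trace preserving; on the left it is $\tfrac{1-p}{n_A}\tr[\rho]=\tfrac{1-p}{n_A}$. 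These cancel, leaving $p\,\tr[\rho A(x)]=p\,\tr[\Lambda(\rho)A(x)\otimes\Id]$. Since $p\in(0,1]$ is strictly positive we divide by it and get $\tr[\rho A(x)]=\tr[\Lambda(\rho)A(x)\otimes\Id]$ for all $\rho$ and all $x$. The identical manipulation applied to the other marginal $\Id\otimes A_p(x)$, and then to $B_q$ with its parameter $q\in(0,1]$, shows that $\Lambda$ broadcasts $A$ and also broadcasts $B$. Hence $(A,B)$ is broadcastable, with $\Lambda$ itself as the witness.

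There is essentially no hard step here; the only point requiring care is that the unsharp parameters are \emph{strictly} positive ($p,q\in(0,1]$), which is exactly what licenses the division — for $p=0$ the observable $A_p$ is trivial and carries no information about $A$, so the claim would be false. It is also worth remarking in the write-up why this contrasts sharply with compatibility: a joint observable for $(A_p,B_q)$ can entangle the two smearing processes in a way that does not factor into a joint observable for $(A,B)$, whereas the broadcasting conditions constrain each marginal separately and linearly, so the added noise simply decouples and can be stripped off.
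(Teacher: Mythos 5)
Your proposal is correct and follows essentially the same route as the paper: substitute $A_p(x)=pA(x)+(1-p)\Id/n_A$ into the broadcasting conditions, use linearity and trace preservation to cancel the identity contributions, and divide by the strictly positive unsharpness parameter, so the same channel $\Lambda$ witnesses broadcastability of $(A,B)$. Your added remarks on why $p>0$ is essential and why the argument fails for compatibility are accurate and complementary, but the core argument is identical to the paper's.
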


\begin{proof}
Suppose the pair $(A_p,B_q)$ is broadcastable by the broadcasting channel $\Lambda$ and  $p,q\in(0,1]$. Then for all $x\in\Omega_{A_p}$  and $\rho\in\cS(\cH)$ we have

\begin{align}
&\tr[\rho A_p(x)]=\tr[\Lambda(\rho)(A_p\otimes\Id)]\nonumber\\
or,~&\tr[\rho(pA(x)+(1-p)\frac{\Id}{n_A})]=\tr[\Lambda(\rho)((pA(x)\nonumber\\
&\hspace{3.8cm}+(1-p)\frac{\Id}{n_A})\otimes\Id)]\nonumber\\
or,~&p\tr[\rho(A(x)]+\frac{(1-p)}{n_A}=p\tr[\Lambda(\rho(A(x)\otimes\Id)]\nonumber\\
&\hspace{4.9cm}+\frac{(1-p)}{n_A}\nonumber\\
or,~&\tr[\rho(A(x)]=\tr[\Lambda(\rho(A(x)\otimes\Id)].
\end{align}

Similarly, one can prove for all $x\in\Omega_{A_p}$ $y\in\Omega_{B_q}$ and $\rho\in\cS(\cH)$
\begin{eqnarray}
\tr[\rho A(x)]&=\tr[\Lambda(\rho)(\Id\otimes A(x))],\\
\tr[\rho B(y)]&=\tr[\Lambda(\rho)(\Id\otimes B(y)],\\
\tr[\rho B(y)]&=\tr[\Lambda(\rho)(B(y)\otimes\Id)].\end{eqnarray}
Therefore, the pair $(A,B)$ is broadcastable by the same channel $\Lambda$.
\end{proof}

The highest value of a unsharp parameter for which the the unsharp version of a pair is compatible, is known as degree of compatibility\cite{heinosaari}. The Theorem \ref{th:upsharp} suggests that there is no concept like degree of broadcasting similar to degree of compatibility. In other words broadcastability of observables can not be quantified in this way!\\

One immediate corollary of  Theorem \ref{th:upsharp} is

\begin{corollary}
If a compatible pair $(A_p,B_q)$ is the unsharp version of an incompatible pair $(A,B)$ for some $p,q\in(0,1]$ then the pair $(A_p,B_q)$ is not broadcastable.\label{corollary__{incomp_broadcast}}
\end{corollary}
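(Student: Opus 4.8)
The plan is to derive this immediately from Theorem~\ref{th:upsharp} by contraposition, using the established inclusion $\cO^d_{broad,XY}\subseteq\cO^d_{comp,XY}$. So I would argue by contradiction: suppose, toward a contradiction, that the pair $(A_p,B_q)$ \emph{is} broadcastable for the given $p,q\in(0,1]$. Since $(A_p,B_q)$ is by hypothesis the unsharp version (with parameters $p,q$) of the pair $(A,B)$, Theorem~\ref{th:upsharp} applies and yields that $(A,B)$ is broadcastable, i.e.\ $(A,B)\in\cO^d_{broad,XY}$.

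The next step is to invoke the hierarchy of layers recalled in the Preliminaries, namely $\cO^d_{broad,XY}\subseteq\cO^d_{1-side-broad,XY}\subseteq\cO^d_{mut-nondist,XY}\subseteq\cO^d_{nondist,XY}\subseteq\cO^d_{comp,XY}$. In particular broadcastability implies compatibility, so $(A,B)\in\cO^d_{comp,XY}$, i.e.\ the pair $(A,B)$ is compatible. This contradicts the standing assumption that $(A,B)$ is an incompatible pair. Hence the assumption that $(A_p,B_q)$ is broadcastable is untenable, and therefore $(A_p,B_q)$ is not broadcastable.

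I do not anticipate any real obstacle here: the corollary is a direct logical consequence of Theorem~\ref{th:upsharp} together with a single inclusion from the layer hierarchy, and no new computation is required. The only point worth stating carefully is that the hypothesis genuinely places us in the situation covered by Theorem~\ref{th:upsharp} --- that is, that ``$(A_p,B_q)$ is the unsharp version of $(A,B)$ with parameters $p,q\in(0,1]$'' is exactly the premise ``$(A_p,B_q)$ broadcastable for some $p,q\in(0,1]$'' feeds on --- so that the implication chain broadcastable $(A_p,B_q)\Rightarrow$ broadcastable $(A,B)\Rightarrow$ compatible $(A,B)$ is valid, contradicting incompatibility of $(A,B)$.
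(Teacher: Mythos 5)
Your argument is correct and is precisely the reasoning the paper intends: the corollary is stated as an immediate consequence of Theorem~\ref{th:upsharp}, obtained by contraposition together with the inclusion $\cO^d_{broad,XY}\subseteq\cO^d_{comp,XY}$, exactly as you do. No gaps; the proof is complete as written.
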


As we already know that there exist incompatible pairs of observables which has non-zero compatibility region, we conclude from the Corollary \ref{corollary__{incomp_broadcast}} that $\cO^d_{broad,XY}\subset \cO^d_{comp,XY}$.
Similarly, it is easy to prove that

\begin{corollary}
If the pair $(A,B)$ is broadcastable, then the pair $(A_p,B_q)$ is broadcastable for all $p,q\in[1,0]$.\label{coro:comp-broadcast}
\end{corollary}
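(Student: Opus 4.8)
The plan is to show that the \emph{same} broadcasting channel $\Lambda$ that broadcasts $(A,B)$ already broadcasts the unsharp pair $(A_p,B_q)$ for every $p,q\in[0,1]$; this is essentially the computation inside the proof of Theorem~\ref{th:upsharp} run in the opposite direction, so no new machinery is needed.

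First I would fix a broadcasting channel $\Lambda\in\cC^d_{broad}$ that broadcasts $(A,B)$, i.e. $\tr[\rho A(x)]=\tr[\Lambda(\rho)(A(x)\otimes\Id)]=\tr[\Lambda(\rho)(\Id\otimes A(x))]$ for all $\rho\in\cS(\cH)$, $x\in X$, and likewise for $B$. I would also record that the unsharp version $A_p$ has the same outcome set as $A$, so $n_{A_p}=n_A$ and $\Omega_{A_p}=\Omega_A$, and similarly for $B_q$.

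The key step is a one-line expansion using linearity of the trace together with the fact that $\Lambda$ is trace-preserving, so $\tr[\Lambda(\rho)(\Id\otimes\Id)]=\tr[\Lambda(\rho)]=\tr[\rho]=1$:
\begin{align}
\tr[\Lambda(\rho)(A_p(x)\otimes\Id)]
&=p\,\tr[\Lambda(\rho)(A(x)\otimes\Id)]+\tfrac{1-p}{n_A}\,\tr[\Lambda(\rho)(\Id\otimes\Id)]\nonumber\\
&=p\,\tr[\rho A(x)]+\tfrac{1-p}{n_A}=\tr[\rho A_p(x)].\nonumber
\end{align}
Repeating the identical manipulation with $\Id\otimes A_p(x)$, and then with $B_q(y)\otimes\Id$ and $\Id\otimes B_q(y)$, shows that $\Lambda$ broadcasts both $A_p$ and $B_q$, hence $(A_p,B_q)\in\cO^d_{broad,XY}$.

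I do not expect any real obstacle here: the only ingredients are that unsharpening acts affinely on POVM elements, that $\Lambda$ preserves the trace, and that $\tr[\rho]=1$. The boundary cases $p=0$ or $q=0$ are covered by the same line, since then e.g. $A_0(x)=\Id/n_A$ is trivial and the computation simply returns $1/n_A=\tr[\rho A_0(x)]$. I would also note that the stated range should read $p,q\in[0,1]$.
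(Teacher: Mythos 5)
Your proof is correct and is exactly the argument the paper intends: the corollary is stated after Theorem~\ref{th:upsharp} with only the remark that it is ``easy to prove'' similarly, and your computation is precisely that theorem's expansion of $A_p(x)=pA(x)+(1-p)\Id/n_A$ run in the reverse direction, using linearity of the trace and trace preservation of $\Lambda$. Your observations that the same channel $\Lambda$ works, that the $p=0$ endpoint is harmless, and that the stated range should read $p,q\in[0,1]$ are all accurate.
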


Our, next proposition is very important and describes the relation between broadcasting of a compatible pair and broadcasting of it's joint observable. 

\begin{proposition}\label{proposi:joint_broad}
For any compatible pair $(A,B)\in\cO^d_{comp,XY}$, if the joint observable is broadcastable by a channel $\Lambda\in\cC^{d}_{broad}$, then the pair is also broadcastable by the same channel $\Lambda$.
\end{proposition}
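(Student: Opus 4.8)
The plan is to exploit directly the marginal relations that define a joint observable, together with linearity of the trace and of the channel action. Write $G\in\cO^d_Z$ with $Z=X\times Y$ for a joint observable of the compatible pair $(A,B)$, so that $A(x)=\sum_y G(x,y)$ for all $x\in X$ and $B(y)=\sum_x G(x,y)$ for all $y\in Y$. By hypothesis $\Lambda\in\cC^d_{broad}$ broadcasts $G$, which, applying the definition of a broadcast observable with $G$ in the role of the observable and $Z$ in the role of the outcome set, means that for every $\rho\in\cS(\cH)$ and every $(x,y)\in Z$,
\[
\tr[\rho\, G(x,y)] = \tr[\Lambda(\rho)\,(G(x,y)\otimes\Id)] = \tr[\Lambda(\rho)\,(\Id\otimes G(x,y))].
\]

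First I would fix $x\in X$ and sum the first displayed equality over $y\in Y$. Since the trace is linear and $\Lambda$ is a fixed linear map, the finite sum passes through both, yielding
\[
\tr\!\Big[\rho \sum_y G(x,y)\Big] = \tr\!\Big[\Lambda(\rho)\,\Big(\big(\textstyle\sum_y G(x,y)\big)\otimes\Id\Big)\Big],
\]
that is, $\tr[\rho A(x)] = \tr[\Lambda(\rho)(A(x)\otimes\Id)]$. Summing the second equality over $y$ in the same way gives $\tr[\rho A(x)] = \tr[\Lambda(\rho)(\Id\otimes A(x))]$, so $A$ is broadcast by $\Lambda$. Summing instead over $x\in X$ for a fixed $y\in Y$ produces the two analogous identities for $B$, so $B$ is broadcast by the \emph{same} channel $\Lambda$. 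Hence $(A,B)\in\cO^d_{broad,XY}$, witnessed by $\Lambda$, which is exactly the claim.

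There is essentially no hard obstacle here: the whole content is that taking a marginal of a POVM is a (classical, linear) operation that commutes with applying the channel $\Lambda$ and with the partial trace against $\Id$, so broadcastability is inherited by the marginals. The only point requiring a moment of care is index bookkeeping — one must note that the broadcasting hypothesis on $G$ holds for \emph{all} outcomes of the product set $Z$, so summing over a full fibre $\{(x,y):y\in Y\}$ (respectively $\{(x,y):x\in X\}$) recovers $A(x)$ (respectively $B(y)$) with no missing term. I would also remark that the same argument shows more generally that any channel broadcasting $G$ broadcasts every observable obtained from $G$ by classical post-processing; taking the $X$- and $Y$-marginals is merely the special case relevant to recovering the pair $(A,B)$.
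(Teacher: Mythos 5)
Your proposal is correct and follows essentially the same route as the paper's own proof: both use the marginal relations $A(x)=\sum_y G(x,y)$, $B(y)=\sum_x G(x,y)$ together with linearity of the trace and the broadcasting condition on each $G(x,y)$ to transfer broadcastability from $G$ to both marginals via the same channel $\Lambda$. The closing remark about general classical post-processing is a nice (correct) observation beyond what the paper states, but the core argument is identical.
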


\begin{proof}
Let the pair $(A,B)\in\cO^d_{XY}$ is compatible and $G\in{\cO_{X\times Y}}$ be the joint observable of it. Therefore for all $x\in X$ and $y\in Y$,

\begin{eqnarray}
A(x)=\sum_y G(x,y);
B(y)=\sum_x G(x,y).
\end{eqnarray}

Suppose that $G$ is broadcastable by $\Lambda\in\cC^d_{broad}$. 

Therefore,
\begin{align}
\tr[\rho A(x)]&=\tr[\rho \sum_y G(x,y)]\nonumber\\
&=\sum_y \tr[\rho G(x,y)]\nonumber\\
&=\sum_y \tr[\Lambda(\rho) G(x,y)\otimes\Id]\nonumber\\
&=\tr[\Lambda(\rho) \sum_y G(x,y)\otimes\Id]\nonumber\\
&=\tr[\Lambda(\rho)A(x) \otimes\Id]
\end{align}

Similarly, one can prove that

\begin{eqnarray}
\tr[\rho A(x)]&=\tr[\Lambda(\rho)\Id\otimes A(x)]\\
\tr[\rho B(y)]&=\tr[\Lambda(\rho)B(y) \otimes\Id]\\
\tr[\rho B(y)]&=\tr[\Lambda(\rho)\Id\otimes B(y)]
\end{eqnarray}

Hence, the proposition is proved.
\end{proof}

The immediate corollary of above proposition is

\begin{corollary}
If a compatible pair is not broadcastable, then their joint observable is also not broadcastable.\label{corollary:comp,broad}
\end{corollary}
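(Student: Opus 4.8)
The plan is to obtain this statement as the logical contrapositive of Proposition \ref{proposi:joint_broad}. Let $(A,B)\in\cO^d_{comp,XY}$ be a compatible pair and let $G\in\cO^d_{X\times Y}$ be a joint observable for it, so that $A(x)=\sum_y G(x,y)$ and $B(y)=\sum_x G(x,y)$. I would argue by contradiction: suppose the conclusion fails, i.e. $G$ is broadcastable. By the definition of a broadcastable observable, this means there exists a broadcasting channel $\Lambda\in\cC^d_{broad}$ that broadcasts $G$.

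Next I would invoke Proposition \ref{proposi:joint_broad} directly: since $(A,B)$ is compatible with joint observable $G$, and $G$ is broadcast by $\Lambda$, that proposition tells us the pair $(A,B)$ is broadcast by the \emph{same} channel $\Lambda$. In particular $(A,B)$ is broadcastable, contradicting the hypothesis that the compatible pair is not broadcastable. Hence no such $\Lambda$ can exist, and $G$ is not broadcastable.

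There is essentially no analytic obstacle here; the only point that needs a moment's care is the quantifier bookkeeping. The hypothesis ``$(A,B)$ is not broadcastable'' means there is no broadcasting channel whatsoever that broadcasts both $A$ and $B$, while Proposition \ref{proposi:joint_broad} only produces broadcastability of $(A,B)$ by the particular channel $\Lambda$ that was assumed to broadcast $G$. Since ``broadcastable by some $\Lambda$'' is exactly what is being denied, this suffices, and no strengthening of Proposition \ref{proposi:joint_broad} is required. I would also note in passing that the joint observable $G$ is not unique in general, but the argument is independent of the choice of $G$: whichever joint observable is fixed at the outset, the same reasoning applies.
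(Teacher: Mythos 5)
Your proof is correct and is exactly the paper's intended argument: the corollary is stated as an immediate consequence of Proposition \ref{proposi:joint_broad}, obtained by taking its contrapositive. Your remarks on the quantifier bookkeeping and on the non-uniqueness of the joint observable are accurate and only make explicit what the paper leaves implicit.
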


We clarify this through our next example.

\begin{example}
Consider unsharp versions of a pair of spin-$\frac{1}{2}$ observables along $x$ and $y$ direction respectively with unsharpness parameter $\lambda=\frac{1}{\sqrt{2}}$. This pair is compatible\cite{Busch}. But from the Corollary \ref{corollary__{incomp_broadcast}}, we get that this pair is not broadcastable. Now, the joint observable of this pair is $G=\{\frac{1}{2}\ket{+\vec{n_1}}\bra{+\vec{n_1}}, \frac{1}{2}\ket{-\vec{n_1}}\bra{-\vec{n_1}}, \frac{1}{2}\ket{+\vec{n_2}}\bra{+\vec{n_2}},$ $\frac{1}{2}\ket{-\vec{n_2}}\bra{-\vec{n_2}}\}$, where, $\vec{n_1}=(1,1,0)$ and $\vec{n_2}=(1,-1,0)$. Therefore, from Corollary \ref{corollary:comp,broad} we get that this joint observable is not broadcastable. It should be noted that $G$ is not informationally complete.\label{example:no_brodcast}
\end{example}

Therefore, Corollary \ref{corollary__{incomp_broadcast}} and Corollary \ref{corollary:comp,broad} together  with example \ref{example:no_brodcast} enable us to write down the following theorem.

\setcounter{theorem}{1}
\begin{theorem}[No-broadcasting theorem for a single informationally incomplete POVM]
There exist some informationally incomplete POVMs which are not individually  broadcastable.\label{th:no_broad_single_POVM}
\end{theorem}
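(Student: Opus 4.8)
The plan is to reuse the POVM $G$ already constructed in Example~\ref{example:no_brodcast} and simply verify three things about it: that it is a bona fide observable, that it is informationally incomplete, and that it is not individually broadcastable. The first point is immediate, since $G$ consists of the four positive operators $\tfrac12\ket{\pm\vec{n_1}}\bra{\pm\vec{n_1}}$ and $\tfrac12\ket{\pm\vec{n_2}}\bra{\pm\vec{n_2}}$, whose sum is $\tfrac12\Id+\tfrac12\Id\cdot\ldots=\Id$ (the two completeness relations $\ket{+\vec{n_i}}\bra{+\vec{n_i}}+\ket{-\vec{n_i}}\bra{-\vec{n_i}}=\Id$ each contribute $\tfrac12\Id$).

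For informational incompleteness I would compute the real linear span of the four effects. Using $\ket{\pm\vec{n_i}}\bra{\pm\vec{n_i}}=\tfrac12(\Id\pm\hat n_i\cdot\vec\sigma)$ with $\hat n_1\propto(1,1,0)$ and $\hat n_2\propto(1,-1,0)$, the effects all lie in $\mathrm{span}_{\mathbb R}\{\Id,\ \sigma_x+\sigma_y,\ \sigma_x-\sigma_y\}=\mathrm{span}_{\mathbb R}\{\Id,\sigma_x,\sigma_y\}$, a $3$-dimensional subspace of the $4$-dimensional real space of Hermitian operators on $\cH_2$ that does not contain $\sigma_z$. Hence the states $\ket{0}\bra{0}$ and $\ket{1}\bra{1}$ (which differ only in their $\sigma_z$ component) give identical outcome statistics under $G$, so $G$ is informationally incomplete, as already noted in Example~\ref{example:no_brodcast}.

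To see that $G$ is not broadcastable, recall it is by construction the joint observable of the compatible pair $(A_\lambda,B_\lambda)$ of unsharp spin-$\tfrac12$ observables along $x$ and $y$ with unsharpness $\lambda=\tfrac1{\sqrt2}$. Since $(A_\lambda,B_\lambda)$ is a compatible unsharp version of the incompatible sharp pair along $x$ and $y$, Corollary~\ref{corollary__{incomp_broadcast}} gives that $(A_\lambda,B_\lambda)$ is not broadcastable. Then Corollary~\ref{corollary:comp,broad} --- which is exactly the contrapositive of Proposition~\ref{proposi:joint_broad} --- shows that the joint observable $G$ of this compatible, non-broadcastable pair is itself not broadcastable. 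Combining this with the previous paragraph yields an informationally incomplete POVM that is not individually broadcastable, which is the claim.

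I do not anticipate any real obstacle: the statement is essentially a packaging of Corollaries~\ref{corollary__{incomp_broadcast}} and~\ref{corollary:comp,broad} applied to the explicit example, and the only genuine computation is the elementary span argument establishing informational incompleteness. The single point I would state explicitly is that ``individually broadcastable'' here means broadcastable as a one-element family in the sense of the single-observable broadcasting definition, which is precisely the notion that Proposition~\ref{proposi:joint_broad} delivers for $G$.
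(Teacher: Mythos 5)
Your proposal is correct and follows essentially the same route as the paper, which likewise obtains the theorem by combining the corollary on non-broadcastability of compatible unsharpenings of incompatible pairs with the corollary that a non-broadcastable compatible pair has a non-broadcastable joint observable, applied to the joint observable $G$ of the unsharp spin-$x$/spin-$y$ pair from Example~\ref{example:no_brodcast}. The only addition is your explicit span argument showing $G$ is informationally incomplete (its effects span only $\mathrm{span}_{\mathbb R}\{\Id,\sigma_x,\sigma_y\}$), which the paper merely asserts; that is a worthwhile bit of extra rigor but not a different proof.
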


This theorem immediately raises a question-What is the minimum amount of extracted information which prohibits quantum broadcasting of observables? We keep this as open question.
Broadcasting is the weaker version of cloning, and therefore the Theorem \ref{th:no_broad_single_POVM} is a generalization of the no-cloning theorem for a single POVM given in \cite{Rastegin}.\\
Now, we define nondisturbing quantum channels with respect to a observable.
\begin{definition}
A channel $\Theta:\cS(\cH)\rightarrow\cS(\cH)$ is nondisturbing with respect to an observable $A$ if for all $\rho\in\cS(\cH)$,
\begin{eqnarray}
\tr[\rho A(x)]=\tr[\Theta(\rho)A(x)]~\forall~x\in\Omega_A
\end{eqnarray}
or, equivalently
\begin{equation}
\Theta^*(A(x))=A(x)~\forall~x\in\Omega_A.
\end{equation}

\end{definition}

A channel is nondisturbing for a set of observables iff it is nondisturbing for all observables in that set. Now, we prove our next proposition.

\begin{proposition}\label{proposi:broadcasting_preorder}
If the observable $A$ is broadcastable by the channel $\Lambda$ and if $\Theta:\cS(\cH)\rightarrow\cS(\cH)$ is a nondisturbing channel with respect to $A$ then $A$ is also broadcastable by $\Lambda\circ\Theta$.
\end{proposition}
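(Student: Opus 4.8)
The plan is to unwind the definitions and show that $\Lambda \circ \Theta$ broadcasts $A$ directly. Since $A$ is broadcastable by $\Lambda$, for all $\rho \in \cS(\cH)$ and $x \in \Omega_A$ we have $\tr[\rho A(x)] = \tr[\Lambda(\rho)\, A(x)\otimes\Id] = \tr[\Lambda(\rho)\,\Id\otimes A(x)]$. I want the same identities with $\rho$ replaced by $\Theta(\rho)$ on the right-hand side, i.e. $\tr[(\Lambda\circ\Theta)(\rho)\,A(x)\otimes\Id] = \tr[\rho A(x)]$. So first I would apply the broadcasting property of $\Lambda$ to the state $\Theta(\rho)$, which is again a legitimate state since $\Theta$ is a channel; this gives $\tr[\Lambda(\Theta(\rho))\,A(x)\otimes\Id] = \tr[\Theta(\rho)\,A(x)]$.

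Next I would use that $\Theta$ is nondisturbing with respect to $A$, which by definition means $\tr[\Theta(\rho)\,A(x)] = \tr[\rho A(x)]$ for all $\rho$ and all $x \in \Omega_A$. Chaining the two equalities yields $\tr[(\Lambda\circ\Theta)(\rho)\,A(x)\otimes\Id] = \tr[\rho A(x)]$. The identical argument with $A(x)\otimes\Id$ replaced by $\Id\otimes A(x)$, using the other half of the broadcasting condition for $\Lambda$, gives $\tr[(\Lambda\circ\Theta)(\rho)\,\Id\otimes A(x)] = \tr[\rho A(x)]$. Hence both defining equations of "$A$ is broadcast by $\Lambda\circ\Theta$" hold, and since $\Lambda\circ\Theta$ is a composition of channels into $\cS(\cH_A\otimes\cH_B)$ it is itself a broadcasting channel, so $A$ is broadcastable by $\Lambda\circ\Theta$.

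I do not anticipate a genuine obstacle here; the only thing to be careful about is the order of composition (precomposing with $\Theta$, so that $\Theta$ acts first on the input state), and the fact that the nondisturbance hypothesis is exactly what converts $\tr[\Theta(\rho)A(x)]$ back into $\tr[\rho A(x)]$. Optionally, one could phrase the whole thing in the Heisenberg picture: $(\Lambda\circ\Theta)^*(A(x)\otimes\Id) = \Theta^*(\Lambda^*(A(x)\otimes\Id))$, and if $\Lambda$ broadcasting $A$ is taken to mean $\Lambda^*(A(x)\otimes\Id) = A(x) = \Lambda^*(\Id\otimes A(x))$, then applying $\Theta^*(A(x)) = A(x)$ finishes it immediately. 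Either presentation is a few lines; I would likely include the Schrödinger-picture version for uniformity with the preceding proofs in the paper.
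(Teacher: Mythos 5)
Your argument is correct and is essentially identical to the paper's proof: apply the broadcasting identity of $\Lambda$ to the state $\Theta(\rho)$ and then use nondisturbance to replace $\tr[\Theta(\rho)A(x)]$ by $\tr[\rho A(x)]$, for both marginals. The Heisenberg-picture remark is a fine optional rephrasing but adds nothing needed.
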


\begin{proof}
The observable $A$ is broadcastable by $\Lambda$. Then for all $\rho\in\cS(\cH)$ and for all $x\in\Omega_A$

\begin{align}
\tr[(\Lambda\circ\Theta)(\rho)A(x)\otimes\Id]&=\tr[\Lambda(\Theta(\rho))A(x)\otimes\Id]\nonumber\\
&=\tr[\Theta(\rho)A(x)]\nonumber\\
&=\tr[\rho A(x)]
\end{align}

Similarly one can prove that

\begin{equation}
\tr[(\Lambda\circ\Theta)(\rho)\Id\otimes A(x)]=\tr[\rho A(x)].
\end{equation}
Therefore, $A$ is broadcastable by $\Lambda\circ\Theta$.

\end{proof}

For an observable, in this way, one can get the set of different broadcasting channels that broadcast the observable. It is an open question whether this set has the highest element i.e., a broadcasting channel in this set from which all other broadcasting channels that can broadcast the observable, can be constructed through concatenation with nondisturbing channels. It is also an open question whether this set has the lowest element i.e., a broadcasting channel in this set which can be constructed from all other broadcasting channels that broadcast the observable through concatenation with nondisturbing channels. \\
The results similar to the Theorem \ref{th:upsharp}, Corollaries \ref{corollary__{incomp_broadcast}}, \ref{coro:comp-broadcast}, \ref{corollary:comp,broad} and Propositions \ref{proposi:joint_broad}, \ref{proposi:broadcasting_preorder} can be similarly shown hold for one-side broadcastable pairs of observables.

\subsection{Geometric properties of different layers of classicality }\label{subsec:Geometric}
In this section, we discuss the geometric properties of the layers of classicality. Suppose $\chi_{XY}\subseteq\cO^d_{XY}$ is a set of pairs of observables with outcome set $X$ and $Y$ respectively. Then we denote the set of broadcasting channels which can broadcast all pairs of observables in the set $\chi$ as $\Gamma^d_{broad}(\chi_{XY})$. Similarly, for a set of broadcasting channels $\xi\subseteq \cC^d_{broad}$, we denote the set of all pairs of observables with outcome set $X$ and $Y$ respectively, which can be broadcast by all broadcasting channels in the set $\xi$ as $\cO^d_{broad,XY}(\xi)$. So, clearly $\cup_{\Lambda\in \cC^d_{broad}}\cO^d_{broad,XY}(\Lambda)=\cO^d_{broad,XY}$.

\begin{proposition}
For any pair of observables $(A,B)\in \cO^d_{XY}$, $\Gamma^d_{broad}(\{(A,B)\})$ is convex.\label{prop:convex_channel_broad}
\end{proposition}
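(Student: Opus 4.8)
The plan is to verify the defining property of convexity directly: given $\Lambda_1,\Lambda_2\in\Gamma^d_{broad}(\{(A,B)\})$ and $p\in[0,1]$, I would show that $\Lambda':=p\Lambda_1+(1-p)\Lambda_2$ again belongs to $\Gamma^d_{broad}(\{(A,B)\})$. Two things must be checked: that $\Lambda'$ is itself a broadcasting channel, and that it broadcasts both $A$ and $B$. (If $\Gamma^d_{broad}(\{(A,B)\})$ happens to be empty there is nothing to prove, the empty set being convex.)

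For the first point I would appeal to the standard fact that the set of CPTP maps with fixed input space $\cS(\cH)$ and fixed output space $\cS(\cH_A\otimes\cH_B)$ is convex. Complete positivity is preserved because the Choi operator of $\Lambda'$ is the convex combination of the (positive semidefinite) Choi operators of $\Lambda_1$ and $\Lambda_2$, hence positive semidefinite; trace preservation is preserved because it is a linear constraint. Since $\Lambda_1$ and $\Lambda_2$ share the input space $\cS(\cH)$ and the output space $\cS(\cH_A\otimes\cH_B)$ with $\cH_A=\cH_B=\cH$, so does $\Lambda'$, which is therefore a broadcasting channel in $\cC^d_{broad}$.

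For the second point, the key observation is that each broadcasting equality is affine in the channel. Fixing $\rho\in\cS(\cH)$ and $x\in X$, linearity of the trace and of the map $\Lambda\mapsto\Lambda(\rho)$ gives
\begin{align}
\tr[\Lambda'(\rho)\,A(x)\otimes\Id]
&=p\,\tr[\Lambda_1(\rho)\,A(x)\otimes\Id]\nonumber\\
&\quad+(1-p)\,\tr[\Lambda_2(\rho)\,A(x)\otimes\Id]\nonumber\\
&=p\,\tr[\rho A(x)]+(1-p)\,\tr[\rho A(x)]\nonumber\\
&=\tr[\rho A(x)],
\end{align}
where the second step uses that $\Lambda_1$ and $\Lambda_2$ both broadcast $A$. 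The same computation with $A(x)\otimes\Id$ replaced by $\Id\otimes A(x)$, and the two analogous computations for $B(y)$, show that $\Lambda'$ satisfies all four broadcasting identities defining broadcastability of the pair $(A,B)$. Hence $\Lambda'\in\Gamma^d_{broad}(\{(A,B)\})$, and the set is convex.

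I do not anticipate a genuine obstacle: the whole argument rests on linearity of the broadcasting conditions together with convexity of the set of channels, the latter being the only ingredient not immediate from the definitions. It is worth remarking that the identical reasoning yields convexity of $\Gamma^d_{broad}(\chi_{XY})$ for an arbitrary family $\chi_{XY}\subseteq\cO^d_{XY}$, since that set is the intersection $\bigcap_{(A,B)\in\chi_{XY}}\Gamma^d_{broad}(\{(A,B)\})$ of convex sets.
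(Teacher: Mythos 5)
Your proof is correct and follows essentially the same route as the paper's: exploit linearity of the trace in the channel to show that the convex combination satisfies all four broadcasting identities. Your additional checks (that the convex combination is itself a CPTP broadcasting channel, and the remark on intersections for general $\chi_{XY}$) are sound but not needed beyond what the paper already does.
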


\begin{proof}
Suppose both broadcasting channels $\Lambda_1$ and $\Lambda_2$ can broadcast the pair $(A,B)$. Then, for all  $p\in [0,1]$, $x\in X$ and $\rho\in(\cS(\cH))$

\begin{align}
&\tr[(p\Lambda_1+(1-p)\Lambda_2)(\rho) A(x)\otimes\Id]\nonumber\\
=~&p\tr[\Lambda_1(\rho) A(x)\otimes\Id]+(1-p)\tr[\Lambda_2(\rho) A(x)\otimes\Id]\nonumber\\
=~&p\tr[(\rho A(x)]+(1-p)\tr[\rho A(x)]\nonumber\\
=~&\tr[\rho A(x)]
\end{align}

In a similar way, one can easily prove that for all $p\in [0,1]$, $x\in X$, $y\in Y$ and $\rho\in(\cS(\cH))$
\begin{align}
\tr[(p\Lambda_1+(1-p)\Lambda_2)(\rho)\Id\otimes A(x)]=\tr[\rho A(x)]
\end{align}

and
\begin{align}
&\tr[(p\Lambda_1+(1-p)\Lambda_2)(\rho) B(y)\otimes\Id)]\nonumber\\
=~&\tr[(p\Lambda_1+(1-p)\Lambda_2)(\rho)\Id\otimes B(y)]\nonumber\\
=~&\tr[(\rho B(y)]
\end{align}

Therefore, the broadcasting channel $\Lambda^{\prime}=p\Lambda_1+(1-p)\Lambda_2$ can the pair $(A,B)$ for all $p\in [0,1]$.

\end{proof}

Similarly, we prove our next proposition.

\begin{proposition}
For any channel $\Lambda\in \cC^d_{broad}$, the set $\cO^d_{XY,broad}(\Lambda)$ is convex.\label{prop:convex_broad}
\end{proposition}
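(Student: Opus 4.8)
The plan is to mirror the proof of Proposition~\ref{prop:convex_channel_broad}, but now fixing the broadcasting channel $\Lambda$ and varying the pair of observables. First I would take two pairs $(A_1,B_1)$ and $(A_2,B_2)$ that both lie in $\cO^d_{XY,broad}(\Lambda)$, meaning that the single channel $\Lambda$ broadcasts each of them. For $p\in[0,1]$ form the convex combinations $A=pA_1+(1-p)A_2$ and $B=pB_1+(1-p)B_2$ in the sense defined in the preliminaries, i.e.\ $A(x)=pA_1(x)+(1-p)A_2(x)$ and similarly for $B(y)$. These are again POVMs with outcome sets $X$ and $Y$, so $(A,B)\in\cO^d_{XY}$, and the goal is to show $\Lambda$ broadcasts $(A,B)$.

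The key step is a one-line linearity computation. For every $\rho\in\cS(\cH)$ and $x\in X$,
\begin{align}
\tr[\Lambda(\rho)\,A(x)\otimes\Id]
&=p\,\tr[\Lambda(\rho)\,A_1(x)\otimes\Id]\nonumber\\
&\quad+(1-p)\,\tr[\Lambda(\rho)\,A_2(x)\otimes\Id]\nonumber\\
&=p\,\tr[\rho A_1(x)]+(1-p)\,\tr[\rho A_2(x)]\nonumber\\
&=\tr[\rho A(x)],
\end{align}
where the middle equality uses that $\Lambda$ broadcasts both $(A_1,B_1)$ and $(A_2,B_2)$. The same computation with $A(x)\otimes\Id$ replaced by $\Id\otimes A(x)$, and likewise with $B(y)\otimes\Id$ and $\Id\otimes B(y)$, gives the remaining three broadcasting identities. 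Hence $\Lambda$ broadcasts the pair $(A,B)$, so $(A,B)\in\cO^d_{XY,broad}(\Lambda)$, proving convexity.

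There is really no substantive obstacle here; the argument is routine bilinearity of the trace together with the fact that the definition of broadcasting is linear in the POVM elements. The only point worth a moment's care is the bookkeeping that a convex combination of POVMs is again a POVM (each $A(x)$ is positive and they sum to $\Id$), so that the statement is about an object that genuinely lives in $\cO^d_{XY}$; but this is immediate from positivity and the partition-of-identity property being preserved under convex combinations. I would present the $A(x)\otimes\Id$ case in full and then simply note that the other three identities follow ``similarly,'' exactly as in the companion proposition.
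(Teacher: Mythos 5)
Your proof is correct and follows essentially the same route as the paper's: the same linearity-of-trace computation for $A(x)\otimes\Id$, with the remaining three broadcasting identities deferred to ``similarly.'' Your added remark that a convex combination of POVMs is again a POVM is a small but welcome piece of bookkeeping the paper leaves implicit.
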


\begin{proof}
Suppose both the pairs $(A_1,B_1)\in\cO^d_{broad,XY}$  and $(A_2,B_2)\in\cO^d_{broad,XY}$ are broadcast by same channel $\Lambda$. Then for all $p\in [0,1]$, $x\in X$ and $\rho\in(\cS(\cH))$

\begin{align}
&\tr[\Lambda (\rho)(pA_1(x)+(1-p)A_2(x))\otimes\Id]\nonumber\\
=~&p\tr[\Lambda (\rho)A_1(x)\otimes\Id]+(1-p)\tr[\Lambda (\rho)A_2(x)\otimes\Id]\nonumber\\
=~&p\tr[\rho A_1(x)]+(1-p)\tr[(\rho)A_2(x)]\nonumber\\
=~&\tr[\rho(pA_1(x)+(1-p)A_2(x))].
\end{align}

In a similar way, one can easily prove that for all $p\in [0,1]$, $x\in X$, $y\in Y$ and $\rho\in(\cS(\cH))$ 

\begin{align}
&\tr[\Lambda (\rho)\Id\otimes(pA_1(x)+(1-p)A_2(x))]\nonumber\\
=~&\tr[\rho(pA_1(x)+(1-p)A_2(x))].
\end{align}

and

\begin{align}
&\tr[\rho(pB_1(y)+(1-p)B_2(y))]\nonumber\\
=~&\tr[\Lambda (\rho)\Id\otimes(pB_1(y)+(1-p)B_2(y))]\\
=~&\tr[\Lambda (\rho)(pB_1(y)+(1-p)B_2(y))\otimes\Id].
\end{align}

So, $\Lambda$ can broadcast the pair $(pA_1+(1-p)A_2,pB_1+(1-p)B_2)$ for all $p\in[0,1]$.
\end{proof}

Therefore, $\cO^d_{broad,XY}$ is union of these convex set. Now, suppose $(A_1,B_1)\in \cO_{broad,XY}$ and $(A_2,B_2)\in \cO_{broad,XY}$ are two pairs of observables. Let, $\lambda$ be an arbitrary number and $\lambda\in[0,1]$. Then consider another pair of observables $(A^{\lambda},B^{\lambda})$ such that $A^{\lambda}=\lambda A_1+(1-\lambda)A_2$ and $B^{\lambda}=\lambda B_1+(1-\lambda)B_2$. Then the following theorem holds.

\begin{theorem}
If two pairs $(A_1,B_1)\in \cO_{broad,XY}$ and $(A_2,B_2)\in \cO_{broad,XY}$ are not broadcastable by the same channel i.e., if $\Gamma^d_{broad}(\{(A_1,B_1),(A_2,B_2)\})=\emptyset$ where $\emptyset$ is a null set, then for all $p,q\in[0,1]$ and $p\neq q$, $$\Gamma^d_{broad}(\{(A^p,B^p),(A^q,B^q)\})=\emptyset.$$\label{th:noncovex_broad}
\end{theorem}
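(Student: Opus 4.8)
The plan is to argue by contradiction, exploiting the convexity result of Proposition~\ref{prop:convex_broad} (equivalently Proposition~\ref{prop:convex_channel_broad}) together with the fact that the family $\{(A^\lambda,B^\lambda)\}_{\lambda\in[0,1]}$ is an affine segment in $\cO^d_{XY}$. Suppose, for some $p,q\in[0,1]$ with $p\neq q$, there is a single broadcasting channel $\Lambda$ that broadcasts both $(A^p,B^p)$ and $(A^q,B^q)$; that is, $\Lambda\in\Gamma^d_{broad}(\{(A^p,B^p),(A^q,B^q)\})$. The goal is to show this forces $\Lambda$ to broadcast both $(A_1,B_1)$ and $(A_2,B_2)$, contradicting $\Gamma^d_{broad}(\{(A_1,B_1),(A_2,B_2)\})=\emptyset$.

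First I would note that, since $p\neq q$, the two points $A^p,A^q$ span the same affine line as $A_1,A_2$: explicitly one can solve $A^p=\lambda A^q+(1-\lambda)A^{?}$... more directly, both $A_1$ and $A_2$ are affine combinations (with coefficients possibly outside $[0,1]$) of $A^p$ and $A^q$, because $A^p-A^q=(p-q)(A_1-A_2)$, so $A_1-A_2=\frac{1}{p-q}(A^p-A^q)$ and hence $A_1=A^p+\frac{1-p}{p-q}(A^p-A^q)$, and similarly for $A_2$, $B_1$, $B_2$. Next, I would observe that the defining broadcasting conditions for a fixed channel $\Lambda$ are \emph{linear} in the observable: the map $A(x)\mapsto \tr[\rho A(x)]-\tr[\Lambda(\rho)A(x)\otimes\Id]$ is linear, and it vanishes on $A^p(x)$ and $A^q(x)$ for every $x$ and $\rho$ by hypothesis. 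A linear functional vanishing on two points of a line vanishes on the entire affine line through them, hence on $A_1(x)$ and $A_2(x)$; the same argument handles the $\Id\otimes A(x)$ conditions and both $B(y)\otimes\Id$ and $\Id\otimes B(y)$ conditions. Therefore $\Lambda$ broadcasts $(A_1,B_1)$ and $(A_2,B_2)$, i.e.\ $\Lambda\in\Gamma^d_{broad}(\{(A_1,B_1),(A_2,B_2)\})$, which is empty — contradiction. Hence $\Gamma^d_{broad}(\{(A^p,B^p),(A^q,B^q)\})=\emptyset$.

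The one subtlety to be careful about — and the step I expect to be the only real obstacle — is the passage from ``vanishes at two points of the segment'' to ``vanishes on $A_1,A_2$''. Here $A_1,A_2$ are recovered as affine combinations of $A^p,A^q$ with coefficients $\tfrac{1-p}{p-q}$ etc., which need \emph{not} lie in $[0,1]$, so this is affine extrapolation rather than convex interpolation; one must check that the broadcasting identities, being genuine equalities of linear functionals (not inequalities), are preserved under arbitrary affine combinations, which they are. A clean way to present this without worrying about signs is: fix $x$ and $\rho$, set $f(\lambda):=\tr[\rho A^\lambda(x)]-\tr[\Lambda(\rho)A^\lambda(x)\otimes\Id]$; since $\lambda\mapsto A^\lambda(x)$ is affine in $\lambda$, $f$ is an affine (degree $\le 1$) function of $\lambda\in\mathbb{R}$, and $f(p)=f(q)=0$ with $p\neq q$ forces $f\equiv 0$, in particular $f(0)=f(1)=0$, which are exactly the broadcasting conditions for $A_2$ and $A_1$. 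Repeating verbatim for the other three families of conditions completes the argument.
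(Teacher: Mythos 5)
Your proof is correct and rests on the same core observation as the paper's: for a fixed channel $\Lambda$ the broadcasting defect is affine in $\lambda$, so it cannot vanish at two distinct values of $\lambda$ unless it vanishes identically, which the hypothesis $\Gamma^d_{broad}(\{(A_1,B_1),(A_2,B_2)\})=\emptyset$ forbids. Your packaging --- extrapolating the vanishing affine function to $\lambda=0,1$ to contradict the hypothesis directly --- is a cleaner and more rigorous rendering than the paper's case analysis with the moduli $\mid p_1(x)\mid$ and $\mid q_1(x)\mid$, but it is not a genuinely different route.
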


\begin{proof}
Suppose $(A^{\lambda},B^{\lambda})$ is broadcastable for all $\lambda\in [0,1]$ by the same broadcasting channel $\Lambda$. Then for all $\lambda\in [0,1]$, $x\in X$ and $\rho\in(\cS(\cH))$,

\begin{align}
&\tr[\rho (\lambda A_1(x)+(1-\lambda)A_2(x))]\nonumber\\
=~&\tr[\Lambda(\rho)(\lambda A_1(x)+(1-\lambda)A_2(x))\otimes\Id]\label{a1}\\
=~&\tr[\Lambda(\rho)\Id\otimes(\lambda A_1(x)+(1-\lambda)A_2(x))]\label{a2}
\end{align}

and

\begin{align}
&\tr[\rho (\lambda B_1(x)+(1-\lambda)B_2(x))]\nonumber\label{b1}\\
=~&\tr[\Lambda(\rho)(\lambda B_1(x)+(1-\lambda)B_2(x))\otimes\Id]\\
=~&\tr[\Lambda(\rho)\Id\otimes(\lambda B_1(x)+(1-\lambda)B_2(x))]\label{b2}
\end{align}

Now, let

\begin{eqnarray}
\tr(\rho A_1(x))-\tr[\Lambda(\rho)A_1(x)\otimes\Id]=p_1(x)\label{p1}\\
\tr(\rho A_2(x))-\tr[\Lambda(\rho)A_2(x)\otimes\Id]=q_1(x)\label{q1}\\
\tr(\rho B_1(y))-\tr[\Lambda(\rho)B_1(y)\otimes\Id]=r_1(y)\\
\tr(\rho B_2(y))-\tr[\Lambda(\rho)B_2(y)\otimes\Id]=s_1(y)
\end{eqnarray}

and

\begin{eqnarray}
\tr(\rho A_1(x))-\tr[\Lambda(\rho)\Id\otimes A_1(x)]=p_2(x)\\
\tr(\rho A_2(x))-\tr[\Lambda(\rho)\Id\otimes A_2(x)]=q_2(x)\\
\tr(\rho B_2(y))-\tr[\Lambda(\rho)\Id\otimes B_1(y)]=r_2(y)\\
\tr(\rho B_2(y))-\tr[\Lambda(\rho)\Id\otimes B_2(y)]=s_2(y)
\end{eqnarray}

Now it is to be noted that for all $i\in\{1,2\}$ and $x\in X$ and $y\in Y$, the numbers $p_1(x),q_i(x),r_i(y),s_i(y)$ are the differences of probabilities and all of these are independent of $\lambda$ and depend on $\Lambda$ and $\rho$. Therefore, moduli of them are less than or equals to 1. It is also to be noted that as $\Lambda$ can not broadcast individual pairs $(A_1,B_1)$ and $(A_2,B_2)$ simultaneously and therefore, for all $i\in\{1,2\}$ and $x\in X$ and $y\in Y$, the numbers $p_1(x),q_i(x),r_i(y),s_i(y)$ simultaneously can not be zero. Now suppose one of those numbers, say for example  $p_1(x)\neq 0$ for some $x\in X$. Then from equations \eqref{a1},\eqref{p1} and \eqref{q1} we have for all $\lambda\in[0,1]$

\begin{align}
&\lambda p_1(x)=(1-\lambda)(-q_1)\nonumber\\
& p_1(x)=\frac{(\lambda-1)}{\lambda}(q_1)\\
& \mid p_1(x)\mid=\mid\frac{(1-\lambda)}{\lambda}\mid\mid q_1 \mid.\label{fail_conv}
\end{align}

So, if $ q_1(x)=0$ above equation is satisfied only for $\lambda=0$. Now suppose $q_1(x)\neq 0$. Then, R.H.S of the equation\eqref{fail_conv} will vary with $\lambda$. But L.H.S.  of the equation\eqref{fail_conv} is independent of $\lambda$ and will not change. So, the equation\eqref{fail_conv} can be satisfied only for a particular value of $\lambda$ and can not be satisfied for all $\lambda\in[0,1]$.

Furthermore, note that for a particular channel of $\Lambda$, for $\lambda<\frac{\mid q_1(x)\mid}{1+\mid q_1(x)\mid}$, R.H.S of equation\eqref{fail_conv} is greater than 1. Therefore, since $\mid p_1(x)\mid\leq 1$, the equation \eqref{a1} can not be satisfied for all $\lambda\leq\frac{\mid q_1(x)\mid}{1+\mid q_1(x)\mid}$. Therefore, for any one or more than one of the numbers $p_1(x),q_i(x),r_i(y),s_i(y)$ for all $i\in\{1,2\}$ and $x\in X$ and $y\in Y$ to be greater than $0$, one can similarly prove that equations\eqref{a1},\eqref{a2}\eqref{b1} and \eqref{b2} can be satisfied simultaneously at most for a particular value of $\lambda \in[0,1]$. Therefore, a broadcasting channel $\Lambda$ can broadcast the pair $(A^{\lambda},B^{\lambda})$ at most for a particular value of $\lambda$. Hence, the theorem is proved.
\end{proof}

Our next proposition will clarify the geometric property of $\cO^d_{broad,XY}$ and the other layers.

\begin{proposition}
 $\cO^d_{broad,XY}$, $\cO^d_{1-side-broad,XY}$, $ \cO^d_{mut-nondist,X,Y}$ and $ \cO^d_{nondist,XY}$ are not convex, in general.\label{noncon1}
\end{proposition}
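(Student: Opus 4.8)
The plan is to give an explicit counterexample already in dimension $d=2$ and to dispose of all four layers simultaneously by means of Proposition \ref{prop:equi_state_heino}. That proposition tells us that, for a pair of non-trivial qubit observables, membership in any of $\cO^2_{broad,XY}$, $\cO^2_{1-side-broad,XY}$, $\cO^2_{mut-nondist,XY}$ and $\cO^2_{nondist,XY}$ is equivalent to the pair being mutually commuting. Hence it is enough to construct two pairs of non-trivial qubit observables that are each mutually commuting but whose convex combination is mutually non-commuting: the two endpoints then lie in all four layers while an interior point of the segment lies in none, so none of the four sets is convex.

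First I would fix the outcome sets $X=Y=\{1,2\}$ and set $A_1=B_1=\{\ket{0}\bra{0},\ket{1}\bra{1}\}$, the sharp $\sigma_z$ observable, together with $A_2=\{\ket{+}\bra{+},\ket{-}\bra{-}\}$ and $B_2=\{\ket{-}\bra{-},\ket{+}\bra{+}\}$, i.e. the sharp $\sigma_x$ observable and the same observable with its two outcomes interchanged. All four observables are non-trivial and commutative; $(A_1,B_1)$ is mutually commuting because all effects are diagonal in the $\sigma_z$ eigenbasis, and $(A_2,B_2)$ is mutually commuting because all its effects are diagonal in the $\sigma_x$ eigenbasis. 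By Proposition \ref{prop:equi_state_heino}, both pairs therefore belong to each of the four layers.

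Next, for $\lambda\in[0,1]$ form $A^\lambda=\lambda A_1+(1-\lambda)A_2$ and $B^\lambda=\lambda B_1+(1-\lambda)B_2$, so that $A^\lambda(1)=\lambda\ket{0}\bra{0}+(1-\lambda)\ket{+}\bra{+}$ and $B^\lambda(1)=\lambda\ket{0}\bra{0}+(1-\lambda)\ket{-}\bra{-}$. Expanding $[A^\lambda(1),B^\lambda(1)]$, the diagonal terms vanish and one is left with $\lambda(1-\lambda)\bigl([\ket{0}\bra{0},\ket{-}\bra{-}]+[\ket{+}\bra{+},\ket{0}\bra{0}]\bigr)$; a short Bloch-vector computation shows that, because of the outcome interchange in $B_2$, these two cross terms reinforce rather than cancel and the bracket equals a nonzero multiple of $\sigma_y$. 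Hence $[A^\lambda(1),B^\lambda(1)]\neq 0$ for every $\lambda\in(0,1)$, while $A^\lambda$ and $B^\lambda$ are readily seen to remain non-trivial. By Proposition \ref{prop:equi_state_heino} used in the contrapositive, the pair $(A^\lambda,B^\lambda)$ then fails to lie in $\cO^2_{nondist,XY}$ for every $\lambda\in(0,1)$, hence, by the inclusions $\cO^2_{broad,XY}\subseteq\cO^2_{1-side-broad,XY}\subseteq\cO^2_{mut-nondist,XY}\subseteq\cO^2_{nondist,XY}$, in none of the four layers; since $(A^1,B^1)=(A_1,B_1)$ and $(A^0,B^0)=(A_2,B_2)$ lie in all four, this proves the proposition. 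For the broadcasting layer the conclusion is of course also consistent with Theorem \ref{th:noncovex_broad}.

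The only delicate point — and the one calculation worth carrying out with care — is the choice of the two pairs. The naive attempt of taking $A_1=B_1$ a $\sigma_z$ observable and $A_2=B_2$ a $\sigma_x$ observable fails: then the two cross commutators in $[A^\lambda(1),B^\lambda(1)]$ cancel identically, so $(A^\lambda,B^\lambda)$ stays mutually commuting and no counterexample results. The asymmetry introduced above, namely relabelling the outcomes of one of the $\sigma_x$ observables (equivalently, flipping its Bloch vector), is precisely what makes the cross terms add up instead of cancel, and verifying this is the crux of the argument.
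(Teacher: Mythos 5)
Your proof is correct and follows essentially the same route as the paper: a qubit counterexample whose membership in all four layers is settled at once by Proposition \ref{prop:equi_state_heino}. The only difference is cosmetic --- the paper breaks commutativity of the four-outcome observable $A'=\tfrac12 A_1+\tfrac12 A_2$ itself, whereas you break mutual commutativity of a pair of two-outcome observables via the outcome relabelling of $B_2$; your computation $[A^\lambda(1),B^\lambda(1)]=-i\lambda(1-\lambda)\sigma_y\neq 0$ checks out, as do the non-triviality of $A^\lambda,B^\lambda$ and your remark that the unrelabelled choice $A_2=B_2$ would fail.
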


\begin{proof}
We prove this by showing a counter example. We know that for two bloch vectors $\vec{n_1}$ and $\vec{n_2}$
\begin{equation}
[\vec{n_1}.\vec{\sigma},\vec{n_2}.\vec{\sigma}]=2i(\vec{n_1}\times\vec{n_2}).\vec{\sigma}.\label{spin_com_rel}
\end{equation}
$\{\sigma_i|i=1,2,3\}$ are pauli matrices. We denote the eigen basis of $\sigma_z$ and $\sigma_x$ as $\{\ket{0},\ket{1}\}$ and $\{\ket{+},\ket{-}\}$ respectively. Now consider two observables $A_1$ and $A_2$ acting on two dimensional Hilbert space such that

\begin{align}
A_1(1)&=\frac{1}{2}\ket{0}\bra{0},\\
A_1(2)&=\frac{1}{2}\ket{1}\bra{1},\\
A_1(3)&=\frac{3}{8}\ket{0}\bra{0}+\frac{1}{8}\ket{1}\bra{1},\\
A_1(4)&=\frac{1}{8}\ket{0}\bra{0}+\frac{3}{8}\ket{1}\bra{1}
\end{align}

and

\begin{align}
A_2(1)&=\frac{1}{2}\ket{+}\bra{+},\\
A_2(2)&=\frac{1}{2}\ket{-}\bra{-},\\
A_2(3)&=\frac{1}{8}\ket{+}\bra{+}+\frac{3}{8}\ket{-}\bra{-},\\
A_2(4)&=\frac{3}{8}\ket{+}\bra{+}+\frac{1}{8}\ket{-}\bra{-}.
\end{align}

Again, consider two sharp spin $\frac{1}{2}$ observables $B_1$ and $B_2$ acting on two dimensional Hilbert space such that

\begin{align}
&B_1(1)=\ket{0}\bra{0};\hspace{0.3cm}B_1(2)=\ket{1}\bra{1}, \\
&B_2(1)=\ket{+}\bra{+};\hspace{0.1cm}B_2(2)=\ket{-}\bra{-}.
\end{align}

Clearly, the both of the pairs $(A_1,B_1)$ and $(A_2,B_2)$ are commutative and mutually commuting and therefore, both of the pairs are individually broadcastable.

Now, let $A^{\prime}=\frac{1}{2}A_1+\frac{1}{2}A_2$ and $B^{\prime}=\frac{1}{2}B_1+\frac{1}{2}B_2$. We know that a pair of qubit observables are broadcastable iff those observables are commutative and mutually commuting\cite{heino_layers_of_classicality}. If we show that $A^{\prime}$ is not commutative then the pair $(A^{\prime},B^{\prime})$ is not broadcastable .

Now, it is to be noted that

\begin{align}
A^{\prime}(1)&=\frac{1}{2}\left[\frac{1}{2}\ket{0}\bra{0}+\frac{1}{2}\ket{+}\bra{+}\right]\nonumber\\
&=\frac{1}{8}\left[2.\Id+(\sigma_z+\sigma_x)\right].
\end{align}

Similarly,
\begin{align}
A^{\prime}(3)&=\frac{1}{2}\left[\frac{3}{8}\ket{0}\bra{0}+\frac{1}{8}\ket{1}\bra{1}+\frac{1}{8}\ket{+}\bra{+}+\frac{3}{8}\ket{-}\bra{-}\right]\nonumber\\
&=\frac{1}{16}\left[4.\Id+(\sigma_z-\sigma_x)\right].
\end{align}

Therefore, using equation \eqref{spin_com_rel} we get $[A^{\prime}(1),A^{\prime}(3)]\neq 0$. Hence, we have proved that  $\cO^2_{broad,X_1Y_1}$ is not convex, where $X_1=\{1,2,3,4\}$ and $Y_1=\{1,2\}$. Several other counter examples exist to prove this proposition.
Now as the pairs $(A_1,B_1)$, $(A_2,B_2)$ and $(A^{\prime},B^{\prime})$ are the pairs of non-trivial qubit observables, from the Proposition \ref{prop:equi_state_heino}, we get that  $\cO^2_{1-side-broad,X_1Y_1}$, $\cO^2_{mut-nondist,X_1,Y_1}$ and $ \cO^2_{nondist,X_1Y_1}$ are also not convex.
This is one of the essential differences between compatibility and any  other layer of classicality in geometric properties.
\end{proof}

   There may be some other differences in geometric properties. But it needs further investigation to find out that.\\
 \textit{From the proof of propositions \ref{prop:convex_channel_broad},\ref{prop:convex_broad} and theorem\ref{th:noncovex_broad} it is clear that similar results hold for one-side broadcasting.}\\
 Below in the Table \ref{tab:properties}, the comparison of several properties of the different layers of classicality has been presented.

  \begin{table}[h]  \centering
\begin{tabular}{ |p{2.4cm}|p{1.8cm}|p{1.8cm}|p{1.8cm}|} 
 \hline
 Name of the layers & Transitivity property & Convexity property & Degree of the layer\\
 \hline
 \hline
 Broadcastable pairs & $\times$ (Prop. \ref{prop:transi}) & $\times$ (Prop. \ref{noncon1}) &$\times$ (Th.\ref{th:upsharp})\\
 \hline
 One-side Broadcastable pairs &$\times$ (Prop. \ref{prop:transi})& $\times$ (Prop. \ref{noncon1})&$\times$ (similar to Th.\ref{th:upsharp})\\ 
 \hline
Mutually non-disturbing pairs &$\times$ (Prop. \ref{prop:transi})& $\times$ (Prop. \ref{noncon1})&?\\ 
\hline
Non-disturbing pairs &$\times$ (Prop. \ref{prop:transi})& $\times$ (Prop. \ref{noncon1})& ? \\
\hline
Compatible pairs &$\times$ (Prop. \ref{prop:transi})& \checkmark (Ref.\cite{heino_incomp_wit})&\checkmark (Ref.\cite{heinosaari})\\
 \hline
\end{tabular}
 \label{tab:properties} 
  \caption{
  Comparison of transitivity property, convexity property and existence of the concept of degree of the layers (i.e., the highest value of unsharp parameter for which the unsharp version of a pair reside in a particular layer) is presented in this table. (\checkmark) indicates that a particular property is satisfied by a layer or a concept exist for that layer, ($\times$) indicates the opposite to that of (\checkmark) and (?) indicates that it is not known whether a particular property is satisfied by a layer or a concept exist for that layer.}
  \end{table}  

\subsection{Relations among the layers of classicality}\label{subsec:Relation}
Before investigating relations among different layers of classicality let us define some strict non-overlapping layers of classicality.

\begin{definition}
Let $(A,B)\in \cO_{XY}$ be a pair of observables. Then
\begin{enumerate}
\item The pair is called weakly compatible if the pair is compatible, but not nondisturbing.
\item The pair is called weakly nondisturbing if the pair is nondisturbing, but not mutually nondisturbing.
\item The pair is called weakly mutually nondisturbing if the pair is mutually nondisturbing, but not one-side broadcastable.
\item The pair is called weakly one-side broadcastable  if the pair is one-side broadcastable, but not broadcastable.
\end{enumerate}\label{def:st_layers}
\end{definition}

This layers in Definition \ref{def:st_layers} are the strips between two successive layers of classicality and can be easily visualised from the figure \ref{fig:layers}.
Let us now investigate the relation between broadcasting and one-side broadcasting. In the following theorems and proposition we will see that specific concatination relations between the broadcasting channels decide the resident of the pair in a particular layer.  We start with our next theorem.\\

\begin{theorem}[Broadcasting and one-side broadcasting]
Consider a pair $(A,B)\in\cO^d_{XY}$ and also suppose that $A$ is broadcastable by $\Lambda_1$ and $B$ is broadcastable by $\Lambda_2$. Then the pair is atleast one-side broadcastable if there exist two channels $\Sigma,\Theta\in \cC^d$ such that $\Lambda_2=(\Theta\otimes\Sigma)\circ\Lambda_1$, where $\Theta\in \cC^d$ is a nondisturbing channel with respect to $A$ and $\Sigma\in \cC^d$ is an arbitrary channel. Along with this, if the pair is also not broadcastable, then it is weakly one-side broadcastable.
\end{theorem}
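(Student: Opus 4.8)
The plan is to verify that the channel $\Lambda_2$ itself witnesses one-side broadcastability of $(A,B)$; the ``weakly'' refinement is then immediate from the definitions.

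First I would unpack the hypotheses into the two trace identities that will be used. Since $A$ is broadcast by $\Lambda_1$, in particular $\tr[\rho A(x)] = \tr[\Lambda_1(\rho)\,A(x)\otimes\Id]$ for every $\rho\in\cS(\cH)$ and $x\in X$. Since $B$ is broadcast by $\Lambda_2$, in particular $\tr[\rho B(y)] = \tr[\Lambda_2(\rho)\,\Id\otimes B(y)]$ for every $\rho\in\cS(\cH)$ and $y\in Y$. The latter is already precisely the ``$B$-into-the-second-factor'' half of the definition of one-side broadcastability applied to the candidate channel $\Lambda=\Lambda_2$, so only the ``$A$-into-the-first-factor'' half remains to be checked.

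For that half I would substitute $\Lambda_2=(\Theta\otimes\Sigma)\circ\Lambda_1$ and move to the dual picture via $(\Theta\otimes\Sigma)^*=\Theta^*\otimes\Sigma^*$:
$$\tr[\Lambda_2(\rho)\,A(x)\otimes\Id]=\tr[(\Theta\otimes\Sigma)(\Lambda_1(\rho))\,A(x)\otimes\Id]=\tr[\Lambda_1(\rho)\,\Theta^*(A(x))\otimes\Sigma^*(\Id)].$$
Two elementary facts then close the argument: $\Theta$ is nondisturbing with respect to $A$, so $\Theta^*(A(x))=A(x)$ for all $x$; and $\Sigma$, being a channel, has a unital dual, $\Sigma^*(\Id)=\Id$. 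Hence the right-hand side equals $\tr[\Lambda_1(\rho)\,A(x)\otimes\Id]=\tr[\rho A(x)]$ by the first identity recorded above. Combining the two halves, $\Lambda_2$ broadcasts $A$ into the first output system and $B$ into the second, so $(A,B)$ is (at least) one-side broadcastable.

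Finally, if in addition $(A,B)$ is not broadcastable, then ``one-side broadcastable but not broadcastable'' is exactly the definition of a weakly one-side broadcastable pair (item 4 of Definition~\ref{def:st_layers}), which gives the last sentence of the theorem. I do not expect a genuine obstacle here: the only points requiring care are keeping track of which tensor slot each observable must be recovered from, and invoking unitality of the dual map $\Sigma^*$ rather than assuming anything further about $\Sigma$ itself.
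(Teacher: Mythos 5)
Your proof is correct and follows essentially the same route as the paper: take $\Lambda_2$ as the witnessing channel, verify the $A$-marginal by passing to the dual $(\Theta\otimes\Sigma)^*=\Theta^*\otimes\Sigma^*$ and using $\Theta^*(A(x))=A(x)$ together with unitality of $\Sigma^*$, and read off the $B$-marginal directly from the hypothesis on $\Lambda_2$. The final reduction of the ``weakly'' claim to the definition is likewise the same.
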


\begin{proof}
 $A$ is broadcast by $\Lambda_1$.  We know that dual channels are unital. Then for all  $x\in X$, and $\rho\in(\cS(\cH))$ we have
\begin{align}
\tr[\Lambda_2(\rho)A(x)\otimes\Id]&=\tr[(\Theta\otimes\Sigma)\Lambda_1(\rho)A(x)\otimes\Id]\nonumber\\
&=\tr[\Lambda_1(\rho)(\Theta\otimes\Sigma)^*(A(x)\otimes\Id)]\nonumber\\
&=\tr[\Lambda_1(\rho)(\Theta^*(A(x))\otimes\Sigma^*(\Id))]\nonumber\\
&=\tr[\Lambda_1(\rho)A(x)\otimes\Id]\nonumber\\
&=\tr(\rho A(x))\label{rel_brod_1_side_broad_1}
\end{align}

Now  $B$ is broadcast by $\Lambda_2$. Then for $y\in Y$ and $\rho\in\cS(\cH)$,
\begin{eqnarray}
\tr(\rho B(y))=\tr[\Lambda_2(\rho)\Id\otimes B(y)]\label{rel_brod_1_side_broad_2}
\end{eqnarray}
From equations \eqref{rel_brod_1_side_broad_1} and equation \eqref{rel_brod_1_side_broad_2}, we get that the pair $(A,B)$ is one-side broadcastable.
\end{proof}

Before, we investigate the relation between one-side broadcasting and nondisturbance, let us write down the proposition which is originally proved in \cite{heino_layers_of_classicality}.

\begin{proposition}
An observable $A\in\cO^d_{X}$ can be measured without disturbing an observable $B\in{\cO^d_Y}$ iff there exists a  $d^{\prime}$ dimensional ancilla Hilbert space $\cK$, a channel $\Lambda:\cS(\cH)\rightarrow \cS(\cK\otimes \cH)$ and an observable $A^{\prime}\in\cO^{d^{\prime}}_{X}$ acting on the ancilla Hilbert space $\cK$ such that for any state $\rho\in\cS(\cH)$, $x\in X$ and $y\in Y$, we have

\begin{eqnarray}
\tr[\rho(A(x))]=\tr[\Lambda(\rho)A^{\prime}(x)\otimes \Id]\\
\tr[\rho(B(y))]=\tr[\Lambda(\rho)\Id\otimes B(y)].
\end{eqnarray}\label{proposi:Heino}
\end{proposition}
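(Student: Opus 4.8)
The plan is to prove the two implications separately, in each case translating between an $A$-compatible instrument that leaves $B$ undisturbed and the claimed dilation $(\cK,\Lambda,A')$.

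For the ``only if'' direction, suppose $A$ can be measured without disturbing $B$, so there is an instrument $\cI=\{\cI_x\}$ with $\tr[\cI_x(\rho)]=\tr[\rho A(x)]$ and $\tr[\cI^{\cC}(\rho)B(y)]=\tr[\rho B(y)]$ for all $\rho\in\cS(\cH)$, $x\in X$, $y\in Y$. I would take $\cK$ to be an $n_A$-dimensional Hilbert space with orthonormal basis $\{\ket{x}\}_{x\in X}$ and define the ``measure-and-record'' channel $\Lambda(\rho)=\sum_x\ket{x}\bra{x}\otimes\cI_x(\rho)$, which is CPTP precisely because $\sum_x\cI_x=\cI^{\cC}$ is trace-preserving, together with the sharp observable $A'(x)=\ket{x}\bra{x}$ on $\cK$, where $d'=n_A$. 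Then $\tr[\Lambda(\rho)A'(x)\otimes\Id]=\tr[\cI_x(\rho)]=\tr[\rho A(x)]$ and $\tr[\Lambda(\rho)\Id\otimes B(y)]=\sum_x\tr[\cI_x(\rho)B(y)]=\tr[\cI^{\cC}(\rho)B(y)]=\tr[\rho B(y)]$, as required.

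For the ``if'' direction, suppose a triple $(\cK,\Lambda,A')$ satisfying the two trace identities is given. Define the maps $\cI_x(\rho)=\tr_{\cK}[(\sqrt{A'(x)}\otimes\Id)\,\Lambda(\rho)\,(\sqrt{A'(x)}\otimes\Id)]$. Each $\cI_x$ is CP, being the composition of $\Lambda$, conjugation by $\sqrt{A'(x)}\otimes\Id$, and the partial trace over $\cK$; and since $\sum_x A'(x)=\Id_{\cK}$, we get $\cI^{\cC}(\rho)=\sum_x\cI_x(\rho)=\tr_{\cK}[\Lambda(\rho)]$, which is a channel on $\cS(\cH)$ (trace preservation follows from $\tr[\tr_{\cK}\Lambda(\rho)]=\tr[\Lambda(\rho)]=\tr\rho$). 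Taking traces, $\tr[\cI_x(\rho)]=\tr[(A'(x)\otimes\Id)\Lambda(\rho)]=\tr[\rho A(x)]$, so $\cI$ implements $A$; and $\tr[\cI^{\cC}(\rho)B(y)]=\tr[\tr_{\cK}[\Lambda(\rho)]\,B(y)]=\tr[\Lambda(\rho)\,\Id\otimes B(y)]=\tr[\rho B(y)]$, so $\cI^{\cC}$ does not disturb $B$. Hence $A$ can be measured without disturbing $B$.

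The only mildly delicate points are bookkeeping: verifying in the forward direction that $\Lambda$ is genuinely trace-preserving (this is exactly where the instrument condition $\sum_x\cI_x=\cI^{\cC}$ is used), and in the backward direction using the normalization $\sum_x A'(x)=\Id_{\cK}$ to identify $\cI^{\cC}$ with $\tr_{\cK}\circ\Lambda$. No infinite-dimensional issues arise since we work throughout in finite dimension, and the ancilla in the forward construction can be taken with $\dim\cK=n_A$.
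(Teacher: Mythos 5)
Your proof is correct. Note that the paper itself gives no proof of this proposition --- it is imported verbatim from Ref.~\cite{heino_layers_of_classicality} --- and your argument is the standard instrument--dilation correspondence used there: the pointer-recording channel $\Lambda(\rho)=\sum_x\ket{x}\bra{x}\otimes\cI_x(\rho)$ with pointer observable $A'(x)=\ket{x}\bra{x}$ in one direction, and the induced instrument in the other. The one genuinely delicate point, complete positivity of $\cI_x$ in the converse direction, you handle correctly by writing it as conjugation by $\sqrt{A'(x)}\otimes\Id$ followed by $\tr_{\cK}$ and then using the identity $\tr_{\cK}[(\sqrt{M}\otimes\Id)Y(\sqrt{M}\otimes\Id)]=\tr_{\cK}[(M\otimes\Id)Y]$ to recover $\cI^{\cC}=\tr_{\cK}\circ\Lambda$; nothing further is needed.
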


Now let us investigate the connection between  broadcasting and nondisturbance. Our next theorem states one of the possible relations.

\begin{theorem}[Broadcasting and nondisturbance]
If $A\in\cO^d_X$ is broadcastable by the broadcasting channel $\Lambda_1\in\cC^d_{broad}$ and $B\in\cO^d_Y$ is broadcastable by the broadcasting channel $\Lambda_2\in\cC^d_{broad}$, then $A$ can be measured without disturbing $B$ if $\Lambda_2$ is locally changeable to $\Lambda_1$. Along with this, if the pair is also not mutually nondisturbing, then it is weakly nondisturbing.\label{th:broad_nondist}
\end{theorem}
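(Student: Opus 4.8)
The plan is to reduce the statement to Proposition~\ref{proposi:Heino} (originally from \cite{heino_layers_of_classicality}), which characterizes ``$A$ can be measured without disturbing $B$'' as the existence of an ancilla space $\cK$, a channel $\Lambda:\cS(\cH)\to\cS(\cK\otimes\cH)$, and an observable $A'\in\cO^{d'}_X$ on $\cK$ with $\tr[\rho A(x)]=\tr[\Lambda(\rho)A'(x)\otimes\Id]$ and $\tr[\rho B(y)]=\tr[\Lambda(\rho)\Id\otimes B(y)]$ for all $\rho$, $x$, $y$. So the whole task is to manufacture exactly this data from the two given broadcasting channels.

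First I would unpack the hypothesis: ``$\Lambda_2$ is locally changeable to $\Lambda_1$'' means, by the definition of local changeability, that there are channels $\Sigma_1,\Sigma_2\in\cC^d$ with $\Lambda_1=(\Sigma_1\otimes\Sigma_2)\circ\Lambda_2$. Since $A$ is broadcast by $\Lambda_1$ we have $\tr[\rho A(x)]=\tr[\Lambda_1(\rho)\,A(x)\otimes\Id]$ for every $\rho\in\cS(\cH)$ and $x\in X$; substituting the factorization and passing to dual channels (using $(\Sigma_1\otimes\Sigma_2)^*=\Sigma_1^*\otimes\Sigma_2^*$ and the unitality of $\Sigma_2^*$) this rewrites as $\tr[\rho A(x)]=\tr[\Lambda_2(\rho)\,\Sigma_1^*(A(x))\otimes\Id]$. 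Now define $A'(x):=\Sigma_1^*(A(x))$: because $\Sigma_1^*$ is completely positive and unital, $A'=\{A'(x)\}$ is a genuine observable on $\cH_A$ with outcome set $X$. On the other side, $B$ being broadcast by $\Lambda_2$ gives $\tr[\rho B(y)]=\tr[\Lambda_2(\rho)\,\Id\otimes B(y)]$ for all $\rho$ and $y$.

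With $\cK=\cH_A$, the channel $\Lambda_2:\cS(\cH)\to\cS(\cH_A\otimes\cH_B)=\cS(\cK\otimes\cH)$, and the observable $A'$ on $\cK$, the two displayed identities are precisely the hypotheses of Proposition~\ref{proposi:Heino}, so $A$ can be measured without disturbing $B$; in particular $(A,B)$ is nondisturbing. Finally, if in addition $(A,B)$ is not mutually nondisturbing, then by Definition~\ref{def:st_layers} the pair is weakly nondisturbing, which finishes the argument.

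The only place that needs care is the tensor-factor bookkeeping: one must make sure the factor $\cH_A$ carrying $A'$ is identified with the ancilla slot of Proposition~\ref{proposi:Heino} and the factor $\cH_B$ on which $B$ is read out with the system slot, and that $A'=\{\Sigma_1^*(A(x))\}$ is a legitimate POVM (positivity and the normalization $\sum_x A'(x)=\Sigma_1^*(\Id)=\Id$ both come from $\Sigma_1^*$ being CP and unital). Beyond this I expect no genuine obstacle — the result is essentially a dual-channel manipulation plugged into the known characterization of nondisturbance.
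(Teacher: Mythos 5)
Your proposal is correct and follows essentially the same route as the paper: factor $\Lambda_1=(\Sigma_1\otimes\Sigma_2)\circ\Lambda_2$, pass to duals to get $\tr[\rho A(x)]=\tr[\Lambda_2(\rho)\,\Sigma_1^*(A(x))\otimes\Id]$, set $A'(x)=\Sigma_1^*(A(x))$, and invoke Proposition~\ref{proposi:Heino} together with the broadcast identity for $B$ under $\Lambda_2$. Your added remark that $\{\Sigma_1^*(A(x))\}$ is a legitimate POVM by complete positivity and unitality of the dual is a small but worthwhile detail the paper leaves implicit.
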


\begin{proof}
If  $\Lambda_2$ is locally changeable to $\Lambda_1$ then there exist two channels $\Sigma_1,\Sigma_2\in\cC^d$ such that

\begin{equation}
\Lambda_1=(\Sigma_1\otimes\Sigma_2)\circ\Lambda_2.
\end{equation}

Since, $A$ is broadcastable by $\Lambda_1$, for all $\rho\in\cS(\cH)$ and for all $x\in X$
\begin{align}
\tr[\rho A(x)]&=\tr[\Lambda_1(\rho) A(x)\otimes\Id ]\nonumber\\
&=\tr[(\Sigma_1\otimes\Sigma_2)\circ\Lambda_2(\rho) A(x)\otimes\Id ]\nonumber\\
&=\tr[\Lambda_2(\rho) \Sigma_1^*(A(x))\otimes\Sigma_2^*(\Id) ]\nonumber\\
&=\tr[\Lambda_2(\rho) \Sigma_1^*(A(x))\otimes\Id)]\label{broadcast_nondist_1}
\end{align}
Since $B$ is broadcast by $\Lambda_2$, all $\rho\in\cS(\cH)$ and for all $y\in Y$

\begin{align}
\tr[\rho B(y)]=\tr[\Lambda_2(\rho)\Id\otimes B(y)]\label{broadcast_nondist_2}.
\end{align}

Choosing $A^{\prime}=\{A^{\prime}(x)\}=\{\Sigma_1^*(A(x))\}$ and from equation \eqref{broadcast_nondist_1}, equation \eqref{broadcast_nondist_2}, and Proposition \ref{proposi:Heino} we get that the pair $(A,B)$ is a nondisturbing pair. Hence, the theorem is proved.
\end{proof}

From Theorem \ref{th:broad_nondist} we immediately get our next proposition.

\begin{proposition}[Broadcasting and mutual nondisturbance]
If $A\in\cO^d_X$ is broadcastable by the broadcasting channel $\Lambda_1\in\cC^d_{broad}$ and $B\in\cO^d_Y$ is broadcastable by the broadcasting channel $\Lambda_2\in\cC^d_{broad}$, then $A$ and $B$ are at least mutually nondisturbing if $\Lambda_2$ and $\Lambda_1$ are locally interchangeable. Along with this, if the pair is also not one-side broadcastable, then it is weakly mutually nondisturbing.\label{th:broad_mut_nondist}
\end{proposition}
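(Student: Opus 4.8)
The plan is to derive the statement directly from Theorem~\ref{th:broad_nondist} by applying it twice, once in each direction, and then to read off the conclusion from the definitions of mutual nondisturbance and of the strict layer in Definition~\ref{def:st_layers}. Recall that $\Lambda_1\simeq_{local}\Lambda_2$ unpacks into the two one-directional statements $\Lambda_1\preceq_{local}\Lambda_2$ and $\Lambda_2\preceq_{local}\Lambda_1$, i.e.\ there exist channels $\Sigma_1,\Sigma_2\in\cC^d$ with $\Lambda_1=(\Sigma_1\otimes\Sigma_2)\circ\Lambda_2$ and channels $\Sigma_1',\Sigma_2'\in\cC^d$ with $\Lambda_2=(\Sigma_1'\otimes\Sigma_2')\circ\Lambda_1$.

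First I would observe that the relation $\Lambda_1\preceq_{local}\Lambda_2$ is precisely the hypothesis ``$\Lambda_2$ is locally changeable to $\Lambda_1$'' used in Theorem~\ref{th:broad_nondist}; since $A$ is broadcast by $\Lambda_1$ and $B$ is broadcast by $\Lambda_2$, that theorem yields that $A$ can be measured without disturbing $B$. Next I would apply the same theorem after interchanging the roles of $A$ and $B$ and of $\Lambda_1$ and $\Lambda_2$: now $B$ is broadcast by $\Lambda_2$, $A$ is broadcast by $\Lambda_1$, and the remaining hypothesis $\Lambda_2\preceq_{local}\Lambda_1$ is exactly ``$\Lambda_1$ is locally changeable to $\Lambda_2$'', so Theorem~\ref{th:broad_nondist} gives that $B$ can be measured without disturbing $A$.

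Having established both ``$A$ can be measured without disturbing $B$'' and ``$B$ can be measured without disturbing $A$'', the pair $(A,B)$ is mutually nondisturbing by definition. For the second assertion, if the pair is moreover not one-side broadcastable, then by item~3 of Definition~\ref{def:st_layers} the pair is weakly mutually nondisturbing, which completes the argument.

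The only point needing a little care -- and which I regard as the main (minor) obstacle -- is the bookkeeping of the direction of $\preceq_{local}$: one must verify that ``locally interchangeable'' really supplies both of the one-sided hypotheses required to invoke Theorem~\ref{th:broad_nondist} in its two guises, and that the instrument built via Proposition~\ref{proposi:Heino} inside the proof of that theorem carries over verbatim under the symmetric relabelling of $A\leftrightarrow B$, $\Lambda_1\leftrightarrow\Lambda_2$. No computation beyond what already appears in the proof of Theorem~\ref{th:broad_nondist} is needed.
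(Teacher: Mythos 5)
Your proposal is correct and matches the paper's approach: the paper derives this proposition directly from Theorem~\ref{th:broad_nondist} (stating only that it follows ``immediately''), and your argument simply makes explicit the two one-sided applications of that theorem supplied by the two directions of $\simeq_{local}$, together with the definitions of mutual nondisturbance and of the weak layer. The direction bookkeeping you flag is handled correctly.
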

The results similar to Theorem \ref{th:broad_nondist} and Proposition \ref{th:broad_mut_nondist} hold if $A$ and $B$ are one-side broadcastable by $\Lambda_1$ and $\Lambda_2$ respectively to two different sides.\\
We obtained connections between different layers of classicality. To find out more connections similar to the connections derived in this subsection, it needs further investigation.

\section{Conclusion}\label{sec:conc}
We have studied the physical and geometric properties of different layers of classicality. We have  shown that such properties of different layers of classicality have differences as well as similarities. In particular, We have shown that: (i) none of the layers of classicality respect transitivity property, (ii) the concept like degree of broadcasting similar to degree of compatibility does not exist, (iii) there exist informationally incomplete POVMs that are not individually broadcastable, (iv) a set of broadcasting channels can be obtained through concatenation of broadcasting and non-disturbing channels, (v) unlike compatibility, other layers of classicality are not convex, in general. Finally, we have obtained connections among different layers of classicality. All of our results are valid for all finite dimensions except Propostion \ref{noncon1}. It is not known whether Proposition \ref{noncon1} is valid for the dimensions more than two. \\
This work opens up several new avenues of research on the compatibility of POVMs and the other layers of classicality. Firstly, it is not known whether the set of all broadcasting channels that broadcast a  particular observable has the greatest or the lowest element. Secondly, we do not know which layers are open or closed among the layers of classicality, except for compatibility. Thirdly, we do not know whether the set of all nondisturbing pairs of observables and the set of all mutually nondisturbing pairs of observables are non-convex in higher dimensions. It will be also interesting to construct wittnesses and resource theory for the different layers. Fourthly, till now we do not have a full set of connections among different layers of classicality and also do not have the full set of mathematical properties. Fifthly, it is possible to generalise atleast some of our results for a set of $n$ observables with $n>2$.\\
An important avenue for future research is to find out how some of these layers of classicality provide advantages over other layers of classicality in several information-theoretic tasks.\\
\section{Acknowledgements}
I would like to thank my advisor Prof. Sibasish Ghosh and my co-advisor Prof. Prabha Mandayam for their support and valuable comments on this work.

\bibliography{layers_of_classicality_ref_v3}

\end{document}